\documentclass[11pt,a4paper]{article}

\usepackage[utf8]{inputenc}
\usepackage[T1]{fontenc}
\usepackage{amsmath,amssymb,amsthm}
\usepackage{mathtools}
\usepackage{booktabs}
\usepackage{float}
\usepackage{array}
\usepackage{tabularx}
\usepackage{multirow}
\usepackage{graphicx}
\usepackage{xcolor}
\usepackage{enumitem}
\usepackage[hidelinks]{hyperref}
\usepackage{algorithm}
\usepackage{algpseudocode}
\usepackage{natbib}
\usepackage[margin=1in]{geometry}
\newcolumntype{Y}{>{\raggedright\arraybackslash}X}
\newcolumntype{P}[1]{>{\raggedright\arraybackslash}p{#1}}
\usepackage{caption}
\usepackage{subcaption}
\usepackage{tikz}
\usetikzlibrary{positioning,fit,arrows.meta,calc}

\newtheorem{theorem}{Theorem}
\newtheorem{corollary}[theorem]{Corollary}
\newtheorem{definition}{Definition}
\newtheorem{proposition}{Proposition}

\newcommand{\CES}{\text{CES}}

\newcommand{\R}{\mathbb{R}}
\newcommand{\E}{\mathbb{E}}
\newcommand{\Prob}{\mathbb{P}}
\DeclareMathOperator*{\argmin}{arg\,min}
\DeclareMathOperator*{\argmax}{arg\,max}

\title{%
  Multi-Method Causal Evidence Synthesis:\\
  Ranking Candidate Drivers by Convergent\\
  Cross-Method Evidence from Observational Data%
}

\author{
  Manish Gupta \\
  Tricon Infotech \\
  \texttt{manishg@triconinfotech.com}
  \and
  Dipanjan De \\
  Tricon Infotech \\
  \texttt{dipanjan.de@triconinfotech.com}
}

\date{August 2026}

\providecommand{\ensemblePfive}{$1.0$}
\providecommand{\ensemblePten}{$0.96$}
\providecommand{\ensembleFOne}{$0.686 \pm 0.035$}
\providecommand{\ensembleVsBest}{$-0.029$}
\providecommand{\bestIndivFOne}{$0.714$}
\providecommand{\maxLooDrop}{$0.032$}
\providecommand{\weightSpearman}{$0.9449$}
\providecommand{\weightSpearmanMin}{$0.814$}
\providecommand{\decompPthreeWithout}{$0.3333$}
\providecommand{\decompPthreeWith}{$1.0$}
\providecommand{\nlEnsembleRecall}{$0.667$}
\providecommand{\calBaseRate}{$0.032$}
\providecommand{\calECE}{$0.101 \pm 0.002$}
\providecommand{\calECEcal}{$0.004 \pm 0.000$}
\providecommand{\calBrierRaw}{$0.025 \pm 0.001$}
\providecommand{\calBrierCal}{$0.011 \pm 0.001$}
\providecommand{\calFolds}{5}
\providecommand{\meanMethodCorr}{$0.209$}
\providecommand{\maxMethodCorr}{$0.914$}
\providecommand{\grangerTeCorr}{$0.075$}
\providecommand{\fpRaw}{$0.277 \pm 0.338$}
\providecommand{\fpBH}{$0.239 \pm 0.335$}
\providecommand{\fpEnsemble}{$0.000 \pm 0.001$}
\providecommand{\fpEnsembleThree}{$0.0051 \pm 0.0033$}
\providecommand{\tpEnsembleFour}{$0.50 \pm 0.05$}
\providecommand{\tpEnsembleThree}{$0.69 \pm 0.09$}
\providecommand{\eThreeShortTF}{$0.60 \pm 0.07$}
\providecommand{\eThreeLongTF}{$0.69 \pm 0.03$}
\providecommand{\eThreeNarrowNF}{$0.33 \pm 0.12$}
\providecommand{\eThreeWideNF}{$0.71 \pm 0.00$}

\providecommand{\hcF}{$1.000$}
\providecommand{\hcBest}{$1.000$}
\providecommand{\hcEdges}{10}
\providecommand{\mfgF}{$0.737$}
\providecommand{\mfgBest}{$0.842$}
\providecommand{\mfgEdges}{9}
\providecommand{\sachsPfive}{$1.0$}
\providecommand{\liftHigh}{$0.100 \pm 0.096$}
\providecommand{\liftNull}{$-0.000 \pm 0.009$}
\providecommand{\liftSpearman}{$0.119$}

\providecommand{\cesPrimaryFBl}{$0.69 \pm 0.04$}
\providecommand{\ntNonlinF}{$0.42$}
\providecommand{\cesNonlinF}{$0.74$}

\providecommand{\stackedF}{$0.71 \pm 0.00$}
\providecommand{\stackedCesF}{$0.69 \pm 0.04$}
\providecommand{\nullModRate}{$0.0047 \pm 0.0022$}
\providecommand{\nullMaxCes}{$0.40 \pm 0.09$}
\providecommand{\renormSpearman}{$1.000 \pm 0.000$}
\providecommand{\renormFOne}{$0.686 \pm 0.035$}
\providecommand{\zeroFillFOne}{$0.686 \pm 0.035$}

\providecommand{\bnNetCount}{6}

\providecommand{\ensemblePfive}{TBD}\providecommand{\ensemblePten}{TBD}
\providecommand{\ensembleFOne}{TBD}\providecommand{\bestIndivFOne}{TBD}
\providecommand{\ensembleVsBest}{TBD}\providecommand{\maxLooDrop}{TBD}
\providecommand{\weightSpearman}{TBD}\providecommand{\weightSpearmanMin}{TBD}
\providecommand{\decompPthreeWithout}{TBD}\providecommand{\decompPthreeWith}{TBD}
\providecommand{\nlEnsembleRecall}{TBD}\providecommand{\calBaseRate}{TBD}
\providecommand{\calECE}{TBD}\providecommand{\meanMethodCorr}{TBD}
\providecommand{\calECEcal}{TBD}\providecommand{\calBrierRaw}{TBD}\providecommand{\calBrierCal}{TBD}
\providecommand{\maxMethodCorr}{TBD}\providecommand{\grangerTeCorr}{TBD}
\providecommand{\fpRaw}{TBD}\providecommand{\fpBH}{TBD}\providecommand{\fpEnsemble}{TBD}
\providecommand{\hcF}{TBD}\providecommand{\hcBest}{TBD}\providecommand{\hcEdges}{TBD}
\providecommand{\mfgF}{TBD}\providecommand{\mfgBest}{TBD}\providecommand{\mfgEdges}{TBD}
\providecommand{\sachsPfive}{TBD}
\providecommand{\liftHigh}{TBD}\providecommand{\liftNull}{TBD}
\providecommand{\liftSpearman}{TBD}

\providecommand{\bnNetCount}{TBD}

\begin{document}

\maketitle

\begin{abstract}
Causal inference from observational data is ubiquitous across science and industry, yet practitioners overwhelmingly rely on a single analytical method, typically regression or SHAP-based feature importance, and treat its output as causal truth. Recent automated tools \emph{select} an optimal method for a given dataset \citep{causalcopilot2025, causaltune2024, opportunityfinder2023}, but comparatively little work \emph{synthesizes} evidence from methods spanning different mathematical traditions applied to the same raw data. We present Multi-Method Causal Evidence Synthesis (MCES), a framework whose goal is to rank \emph{which candidate drivers} in an observational system are most likely relevant to a set of outcome variables, and with what strength of evidence. MCES runs eleven analytical methods across eight distinct mathematical traditions on observational panel data and pools their outputs into a Convergent Evidence Score (CES) that ranks driver--outcome relationships. CES quantifies the \emph{convergence of evidence} across analytical lenses: the degree to which methods with different assumptions point toward the same driver--outcome relationship. It does not claim causal identification in the interventionist sense; it supports hypothesis prioritization, not a universally transferable probability of causation. We additionally fit a \emph{scenario-specific} empirical calibration of CES against known ground truth and are explicit that its transportability to new domains is not established. The framework first applies \emph{Structural--Behavioral Decomposition} to remove definitional (algebraic) relationships from the analysis space (but only where identity components are themselves candidate drivers, the setting in which such false positives actually arise), then runs all methods, normalizes outputs to $[0,1]$, and computes a weighted composite in the form of a linear opinion pool \citep{stone1961, genest1986}. We distinguish MCES from method selection, same-method ensembles, cross-algorithm prediction ensembles (Super Learner), and literature-level synthesis. Using synthetic data with embedded ground truth, the real Sachs protein-signaling benchmark, a suite of standard Bayesian-network structure benchmarks, and two additional synthetic domains (healthcare, manufacturing), we show that MCES reliably ranks the true edges near the top: on the primary synthetic scenario the highest-scoring pairs are all true edges (Precision@5 $=$ \ensemblePfive, Precision@10 $=$ \ensemblePten), while showing a low empirical rate of null pairs reaching Moderate-or-higher convergence on the evaluated scenarios; per-method significance gates are Benjamini--Hochberg adjusted across the driver--outcome grid by default. Our central methodological point is not that the pool beats any individual method on a fixed accuracy metric (indeed, on some scenarios a single well-chosen method scores higher) but that \emph{no single method is uniformly best across the evaluated scenarios}, so MCES offers a method-agnostic default that avoids committing in advance to a single analytical tradition. We report this even-handedly.
\end{abstract}

\noindent\textbf{Keywords:} causal inference, causal triangulation, evidence synthesis, ensemble methods, observational data, causal hypothesis prioritization, cross-method convergence, causal evidence score

\section{Introduction}\label{sec:intro}

\subsection{The Ubiquity of the Problem}

Organizations across industries routinely face a fundamental question: \emph{``Why did this metric change, and what caused it?''} Whether a manufacturer detects rising defect rates after changing suppliers, a school district sees test scores fall after a curriculum change, or a SaaS company sees falling retention after a product update, the diagnostic challenge is the same. Observational data is abundant; causal understanding is scarce.

Table~\ref{tab:domains} illustrates the universality of this problem across domains.

\begin{table}[ht]
\centering
\caption{The causal diagnostic problem across domains.}
\label{tab:domains}
\small
\begin{tabularx}{\textwidth}{@{}lYYll@{}}
\toprule
\textbf{Domain} & \textbf{Drivers ($X$)} & \textbf{Outcomes ($Y$)} & \textbf{Units} & \textbf{Time} \\
\midrule
Enterprise Software & Product perf., training, usability, staffing & Revenue, throughput, efficiency & Sites & Weeks \\
Healthcare & Treatments, protocols, staffing, equipment & Mortality, readmission, satisfaction & Hospitals & Months \\
Manufacturing & Machine settings, materials, operator skill & Defect rate, yield, throughput & Lines & Shifts \\
Education & Teaching methods, class size, technology & Test scores, graduation rate & Schools & Semesters \\
SaaS / Product & Features, UX changes, pricing, onboarding & Retention, NPS, revenue, churn & Cohorts & Months \\
\bottomrule
\end{tabularx}
\end{table}

Decisions are frequently made on weak correlational evidence, and the consequences of misattribution (acting on the wrong candidate relationship, or missing the true one) are substantial and often invisible.

\subsection{Common Analytical Practice}\label{sec:tiers}

Analysts approach the driver-outcome question with three broad families of method, each answering a different question.

\paragraph{Descriptive and associational analysis.}
Correlation, linear regression, and standard reporting. These identify \emph{where} an outcome moved and which variables co-move with it, but not \emph{why}: association is not causation, and the direction of any relationship is left open.

\paragraph{Predictive modeling.}
Gradient-boosted models with feature-importance or SHAP attributions \citep{lundberg2017}. These capture non-linear structure and rank features by contribution to a prediction, but importance for a \emph{prediction} is not evidence of \emph{causation} of the outcome, and the two are routinely conflated in practice.

\paragraph{Causal inference and effect estimation.}
Methods that target an identified causal estimand (difference-in-differences, instrumental variables, structural causal models, and tools such as DoWhy and EconML). These can support causal claims under stated assumptions, but are typically applied one method at a time, often require a manually specified causal graph or design, and produce output that requires expert interpretation.

Each family has genuine strengths and characteristic blind spots. In practice an analysis usually commits to one method, and thereby to that method's assumptions and failure modes, even though the appropriate choice is rarely known in advance.

\subsection{The Single-Method Trap}\label{sec:trap}

Current practice selects one analytical method and treats its output as ground truth. This fails because each method has different assumptions, captures different types of relationships, and has different failure modes. Table~\ref{tab:blindspots} illustrates these complementary blind spots.

\begin{table}[ht]
\centering
\caption{Complementary blind spots of component analytical methods.}
\label{tab:blindspots}
\small
\begin{tabular}{@{}lll@{}}
\toprule
\textbf{Method} & \textbf{Captures} & \textbf{Misses} \\
\midrule
Regression & Linear associations & Non-linear effects, causal direction \\
SHAP & Non-linear importance & Causal direction, temporal dynamics \\
Granger Causality & Temporal predictability & Non-linear relationships \\
Bayesian Networks & DAG structure & Hidden confounders \\
Causal Forest & Heterogeneous effects & Median binarization (impl.\ choice) \\
\bottomrule
\end{tabular}
\end{table}

A concrete example illustrates the danger: when average price and volume are included among the candidate features, SHAP can rank the pair (Average~Price, Revenue) highly. This is trivially true because $\text{Revenue} = \text{Price} \times \text{Volume}$ by definition, it is an algebraic identity, not a causal discovery. The pathology arises specifically when an outcome's algebraic \emph{components} are also present in the feature set; Section~\ref{sec:decomp} formalizes this condition, and Section~\ref{sec:exp_decomp} shows that removing such identity pairs raises top-of-list precision from \decompPthreeWithout\ to \decompPthreeWith\ in a controlled setting.

\subsection{From Triangulation to Quantification}\label{sec:triangulation}

The concept of using multiple methods to strengthen causal inference, \emph{causal triangulation}, has been recommended in the methodological literature for over a decade \citep{lawlor2016, munafio2018, hammerton2021}. Related lines of work vary an analysis over many defensible choices: specification-curve analysis \citep{simonsohn2020} and multiverse analysis \citep{steegen2016} report the distribution of one estimand across specifications, and evidence factors \citep{rosenbaum2017} combine quasi-independent tests of a single hypothesis. These differ from our setting in two ways: they typically concern one estimand or one hypothesis, and they are not organized to \emph{rank many candidate relationships} by cross-method agreement. In the triangulation literature specifically, the recommendation has remained largely \textbf{qualitative} (``use multiple methods and see if they agree''), without a concrete rule for which heterogeneous evidence measures to combine, how to normalize outputs that target different quantities, or how to form a composite ranking.

Recent work has advanced along adjacent but distinct lines:
\begin{itemize}[nosep]
  \item \citet{bhattacharya2026} formalize \emph{robust weighted triangulation of causal effects under model uncertainty}, combining identified effect functionals from multiple candidate causal models, each with its own identifying assumptions, weighted by data-driven measures of model validity, with error bounds and valid inference. This is the closest formal treatment of weighted triangulation, and the contrast sharpens our scope: they combine estimates of a \emph{single common estimand} (the causal effect) across competing models, whereas MCES pools \emph{non-commensurable} evidence measures that target different quantities and uses them to \emph{rank many} candidate driver--outcome relationships rather than to estimate one effect.
  \item \citet{shi2025} quantified triangulation at the \emph{literature} level, extracting evidence from published papers using LLMs.
  \item Causal-Copilot \citep{causalcopilot2025} and CausalTune \citep{causaltune2024} \emph{select} the best method for a dataset.
  \item A 2025--2026 line of work ensembles \emph{multiple causal-discovery algorithms} into a single structural estimate: voting-theoretic aggregation with recovery conditions \citep{vo2026voting}, linear-opinion-pooled discovery experts with LLM reweighting \citep{li2026cea}, and structure-learning ensembles for prioritized health hypotheses \citep{adhikari2025heterogeneous}. These pool within the causal-discovery family to recover a graph; Section~\ref{sec:related} details how our setting differs in both members and task.
\end{itemize}

These differ from the synthesis we pursue here: \emph{selection discards evidence from all non-selected methods; synthesis aggregates evidence from all of them.} We are careful to distinguish this from cross-algorithm \emph{prediction} ensembles such as Super Learner \citep{vanderlaan2007}, which optimize a single predictive loss for one estimand via cross-validated weights; MCES instead pools evidence across methods that target \emph{different} estimands (association, temporal predictability, graphical dependence, treatment effect), with weights that encode epistemic priors rather than minimizing a shared loss (Section~\ref{sec:related}).

\subsection{Contributions}

We make the following contributions:

\begin{enumerate}[nosep]
  \item \textbf{MCES Framework}, a quantitative framework that combines heterogeneous evidence measures which do \emph{not} estimate a common causal estimand (association, temporal predictability, graphical dependence, treatment effect), for the purpose of ranking many candidate driver--outcome relationships by cross-method agreement. The pooling rule itself is classical, and ensembling \emph{within} the causal-discovery family is now established \citep{vo2026voting, li2026cea}; the combination we claim as novel is narrower and specific: non-commensurable evidence from causal \emph{and non-causal} traditions, pooled to rank a declared driver--outcome grid rather than to recover a graph. We distinguish this from method selection, same-method and cross-algorithm structural ensembles, and specification/multiverse analyses in Section~\ref{sec:related}.
  \item \textbf{Structural-Behavioral Decomposition}, formal separation of definitional (algebraic) from behavioral (causal) relationships as a pre-processing step for causal analysis.
  \item \textbf{Convergent Evidence Score (CES)}, a weighted composite score in $[0,1]$ that quantifies convergence of evidence across analytical traditions. We name the score for what it measures, convergence of evidence bearing on a causal hypothesis, not for a causal guarantee it does not provide. MCES uses uniform weights by default; optional tiered weights encode an application-specific evidential preference and are studied by sensitivity analysis.
  \item \textbf{Theoretical analysis}, a variance-based characterization of how lower cross-method score correlation can improve the stability of the pooled score under stated assumptions.
  \item \textbf{Empirical evaluation}, experiments on synthetic ground truth, the real Sachs benchmark, and two additional synthetic domains, characterizing \emph{when} the ensemble helps (robustness to scenario shift, non-linear detection) and when it does not (it is not uniformly the single best predictor), together with calibration and false-positive analyses.
  \item \textbf{Calibration and multiple-testing controls}, scenario-specific isotonic calibration of CES, and default Benjamini--Hochberg adjustment of the applicable per-method significance tests across the driver--outcome grid.
  \item \textbf{Reference implementation}, a Python package with three synthetic domains and a real-data benchmark loader (available from the authors on request).
\end{enumerate}

\section{Related Work}\label{sec:related}

\subsection{Causal Triangulation}

Methodological triangulation originates with \citet{denzin1970} in social science. \citet{lawlor2016} formalized it for aetiological epidemiology, proposing that multiple analytical approaches with different assumptions should be applied to the same research question. \citet{munafio2018} elevated this to a general principle in \emph{Nature}: ``Robust research needs many lines of evidence.'' \citet{hammerton2021} extended the framework with practical guidance in \emph{Psychological Medicine}.

All of these remain \textbf{qualitative}: they recommend using multiple methods but do not specify how to combine their outputs into a quantitative measure. MCES operationalizes this decade-old recommendation.

\subsection{Component Analytical Methods}

Each of the eleven methods in MCES has a rich individual literature. Table~\ref{tab:methods_lit} summarizes the foundational work and domain applications.

\begin{table}[ht]
\centering
\caption{Foundational literature for MCES component methods.}
\label{tab:methods_lit}
\small
\begin{tabular}{@{}lP{3.3cm}P{5.2cm}@{}}
\toprule
\textbf{Method} & \textbf{Key Papers} & \textbf{Limitation} \\
\midrule
Partial Correlation & \citet{pearson1895} & Cannot determine direction \\
Lasso Regression & \citet{tibshirani1996} & Assumes linearity \\
Distance Correlation & \citet{szekely2007} & Unsigned; no direction \\
Predictive Power Score & \citet{wetschoreck2020} & Univariate; no confounder control \\
Mixed-Effects Models & \citet{laird1982} & Assumes linearity \\
Random Forest + SHAP & \citet{breiman2001, lundberg2017} & Importance $\neq$ causation \\
Granger Causality & \citet{granger1969} & Assumes linearity, stationarity \\
Interrupted Time Series & \citet{box1975, bernal2017} & Requires known intervention \\
Transfer Entropy & \citet{schreiber2000} & Requires long time series \\
Bayesian Networks & \citet{pearl2000, spirtes2000} & Assumes causal sufficiency \\
Causal Forests & \citet{wageracthey2018} & Median binarization (impl.\ choice) \\
\bottomrule
\end{tabular}
\end{table}

Each method has been individually applied to problems in many fields. To our knowledge, no published work pools methods from all of these traditions into a single evidence score computed on raw data for the purpose of ranking candidate relationships.

\subsection{Ensemble Causal Discovery (Same-Method)}

Existing ensemble approaches largely aggregate instances of the \emph{same} algorithm across data partitions:
\begin{itemize}[nosep]
  \item The FGES/TETRAD line of work \citep{ramsey2018} bootstraps a causal-discovery algorithm and votes on edge presence.
  \item \citet{guo2021} propose scalable two-phase causality ensembles combining an algorithm across data partitions.
  \item E-CIT \citep{ecit2025} partitions data, runs a base conditional independence test on each subset, and aggregates the resulting statistics.
  \item Stability selection \citep{meinshausen2010} subsamples the data and retains variables selected with high frequency by a single base selector.
\end{itemize}

\textbf{Distinction:} these combine instances of the same algorithm (or selector) class across different data splits, so they share that algorithm's structural assumptions and failure modes.

\paragraph{Cross-algorithm structural ensembles (2025--2026).} A recent and fast-moving line of work aggregates \emph{multiple different causal-discovery algorithms} into one structural estimate, and it must be distinguished carefully from what we do. \citet{vo2026voting} give a voting-theoretic framework for ensembling structural predictions, with conditions under which the aggregate recovers the true graph. \citet{li2026cea} pool causal-discovery experts via linear opinion pooling, the same aggregation rule we use, with an LLM reweighting experts near decision boundaries; \citet{peng2026caution} similarly resolve most edges by algorithmic consensus and arbitrate the rest with a trust-calibrated LLM. \citet{adhikari2025heterogeneous} combine an ensemble of structure-learning algorithms with heterogeneous effect estimation to produce robust prioritized causal hypotheses in healthcare. Quantitative ensembling \emph{within} the causal-discovery family is therefore established, including with theoretical guarantees, and we claim no novelty for the pooling rule itself.

The distinction that remains, and that defines this paper's contribution, is twofold. First, the \emph{members}: every expert in those ensembles is a causal-discovery algorithm estimating the same object, a graph, whereas MCES deliberately pools evidence measures that estimate \emph{different} objects, association, predictive importance, temporal predictability, graphical dependence, and treatment contrast, including methods that are explicitly not causal. Second, the \emph{task}: those works recover graph structure, whereas MCES ranks a predefined driver--outcome candidate grid for hypothesis prioritization, with empirical false-positive control in place of recovery guarantees. To our knowledge, this specific combination, non-commensurable cross-tradition evidence pooled to rank a declared candidate grid, has not been studied, though we do not claim that no applied study has ever informally compared methods.

\subsection{Automated Causal Inference (Method Selection)}

Recent tools automate causal analysis by \emph{selecting} the optimal method:
\begin{itemize}[nosep]
  \item \textbf{Causal-Copilot} \citep{causalcopilot2025}: an LLM-powered autonomous agent integrating 20+ methods. It analyzes data characteristics to select the appropriate algorithm, configure hyperparameters, and interpret results. It does \emph{not} synthesize across methods.
  \item \textbf{CausalTune} \citep{causaltune2024}: AutoML for causal estimators using energy scoring to select the best estimator. It does not combine all estimators.
  \item \textbf{OpportunityFinder} \citep{opportunityfinder2023}: a code-less framework for panel data that dynamically selects algorithms based on data scale.
\end{itemize}

The critical distinction is between \emph{selection} and \emph{synthesis}. Selection picks the single best method for the data and discards evidence from all other methods. Synthesis runs all methods and combines their evidence. Selection preserves the blind spots of the chosen method; synthesis mitigates them through complementarity.

\subsection{Evidence Synthesis from Literature}

\citet{shi2025} present the ``Evidence Triangulator'' in \emph{Nature Communications}, which uses LLMs to extract causal evidence from published papers and computes Convergency of Evidence (CoE) and Level of Convergency (LoC) scores. This is the closest existing work to our concept. However, it operates at the \emph{meta-analysis} level, it requires published studies as input, not raw data. Our CES operates on raw observational data: we run the methods ourselves rather than extracting results from literature.

\subsection{Outcome Decomposition}

DuPont Analysis (1920s), KPI trees, and multiplicative decomposition are standard tools for algebraic outcome breakdown, and practitioner treatments of ``why did the KPI change'' decomposition are common. These are used for attribution \emph{within} an identity; they are not combined with statistical causal discovery to \emph{screen out} identity edges before analysis. Our Structural--Behavioral Decomposition serves that specific screening role.

\subsection{Summary: What Exists vs.\ Our Gap}

Table~\ref{tab:gap} summarizes the landscape.

\begin{table}[ht]
\centering
\caption{The 4-way distinction: existing paradigms vs.\ MCES.}
\label{tab:gap}
\small
\begin{tabularx}{\textwidth}{@{}llYY@{}}
\toprule
\textbf{Paradigm} & \textbf{Examples} & \textbf{What It Does} & \textbf{Limitation} \\
\midrule
Method Selection & CausalTune, Causal-Copilot & Picks best ONE method & Discards all other evidence \\
Same-Method Ensemble & TETRAD, E-CIT & Same algorithm on data splits & Shared assumptions \& failure modes \\
Structural Ensemble & \citet{vo2026voting}, \citet{li2026cea} & Pools causal-discovery algorithms into one graph & Members share the graph-recovery task and causal-family assumptions \\
Literature Synthesis & Shi et al.\ (2025) & LLM extracts from papers & Requires existing studies \\
\textbf{Cross-Tradition (MCES)} & \textbf{This paper} & \textbf{11 methods, 8 traditions, incl.\ non-causal; ranks a declared grid} & \textbf{No recovery guarantee; empirical FP control only} \\
\bottomrule
\end{tabularx}
\end{table}

\section{Problem Formulation}\label{sec:problem}

\subsection{General Setup}

Consider $N$ observational units (hospitals, factories, schools) indexed by $i = 1, \ldots, N$, observed over $T$ time periods indexed by $t = 1, \ldots, T$. Let $\mathbf{X} = \{x_1, \ldots, x_M\}$ denote $M$ potential driver variables, and $\mathbf{Y} = \{y_1, \ldots, y_K\}$ denote $K$ outcome variables. At each observation point $(i,t)$, we observe the vector:
\begin{equation}
  \mathbf{O}_{i,t} = \big(\mathbf{X}_{i,t},\; \mathbf{Y}_{i,t}\big) \in \R^{M+K}.
\end{equation}
The goal is to rank the candidate driver--outcome pairs $\{(x_j, y_k) : j \in [M],\, k \in [K]\}$ by the strength of convergent evidence for a genuine relationship, and to evaluate that ranking against the set $\mathcal{S}$ of true pairs where ground truth is available.

\subsection{Structural-Behavioral Decomposition}\label{sec:decomp}

Let $V = \mathbf{X} \cup \mathbf{Y}$ be all measured variables.

\begin{definition}[Structural Relationship]
A variable $v \in V$ stands in a \emph{structural} relationship to variables $u_1, \ldots, u_m \in V$ if $v = f(u_1, \ldots, u_m)$ holds by definition or by an accounting or measurement identity, independent of any behavioral mechanism. We use ``structural'' in this accounting-identity sense, not in the structural-equation-model sense; the targets of the decomposition are derived-variable identities that are not intervention candidates. Each such identity contributes edges $(u_i, v)$ to the structural graph $G_S \subseteq V \times V$.
\end{definition}

\begin{definition}[Behavioral Relationship]
A relationship $y = g(x_1, \ldots, x_m; \varepsilon)$ is \emph{behavioral} if it represents a mechanism that is invariant under intervention on the $x_i$ \citep{pearl2000} rather than an algebraic re-expression. We denote the behavioral graph $G_B$.
\end{definition}

Examples of structural relationships:
\begin{align}
  \text{Revenue} &= \text{Average Price} \times \text{Volume}, \\
  \text{Yield} &= 1 - \text{Defect Rate}, \\
  \text{ROE} &= \text{Margin} \times \text{Turnover} \times \text{Leverage}.
\end{align}

$G_S$ is known \emph{a priori} from domain knowledge; $G_B$ represents the behavioral relationships whose candidate edges we seek to prioritize. The candidate set after decomposition is
\begin{equation}\label{eq:candidate}
  \mathcal{P} = (\mathbf{X} \times \mathbf{Y}) \setminus \{(u,v) : (u,v) \in G_S \text{ or } (v,u) \in G_S\}.
\end{equation}
Crucially, this screening changes the analysis \emph{only} when an outcome's identity components $u_i$ are themselves candidate drivers (i.e.\ $u_i \in \mathbf{X}$). If the components are not in the feature set, for instance when only the derived outcome is measured, there are no identity pairs to remove and decomposition is a no-op. This is not a limitation so much as a precise statement of scope: the false-positive class that decomposition prevents exists exactly when components and their algebraic combination are both offered to the methods as (driver, outcome) candidates. Section~\ref{sec:exp_decomp} constructs and measures precisely this case. Identity components remain in the graph as ordinary variables; only the tautological component$\rightarrow$composite edges are removed.

\subsection{The Multi-Method Ensemble}

Define a method suite $\mathcal{M} = \{m_1, \ldots, m_{11}\}$ spanning eight distinct mathematical traditions, classical and regularized statistics, energy statistics, panel econometrics, machine learning (including non-parametric single-feature prediction), time-series analysis, information theory, probabilistic graphical models, and causal machine learning. For each method $m_k$ and each candidate pair $(x_j, y_l)$:
\begin{enumerate}[nosep]
  \item $m_k$ produces raw evidence $e_k(x_j, y_l)$ in a method-specific form (correlation coefficient, regression coefficient, SHAP value, p-value, entropy, graph edge, treatment effect).
  \item Normalize: $\tilde{e}_k(x_j, y_l) = \phi_k\big(e_k(x_j, y_l)\big) \in [0, 1]$, where $\phi_k$ is a method-specific normalization function.
  \item Weight: each method receives a nonnegative synthesis weight $w_k$. MCES uses uniform weights by default; alternative weights may encode application-specific evidential preferences and are evaluated by sensitivity analysis (Section~\ref{sec:sensitivity}).
\end{enumerate}

The Convergent Evidence Score is:
\begin{equation}\label{eq:ces}
  \CES(x_j, y_l) = \frac{\sum_{k=1}^{11} w_k \cdot \tilde{e}_k(x_j, y_l)}{\sum_{k=1}^{11} w_k}.
\end{equation}

\subsection{Desiderata}

The framework must satisfy eight requirements:
\begin{enumerate}[nosep]
  \item Adjust for measured covariates, while acknowledging that unmeasured confounding remains unresolved.
  \item Capture linear \emph{and} non-linear relationships.
  \item Distinguish temporal precedence from contemporaneous correlation.
  \item Account for unit-level heterogeneity (different baselines per unit).
  \item Separate structural from behavioral relationships.
  \item Be validatable against known ground truth.
  \item Be domain-agnostic.
  \item Scale to reasonable dimensionality (tens of drivers, handful of outcomes).
\end{enumerate}

\section{The MCES Framework}\label{sec:framework}

This section details each step of the MCES pipeline. Figure~\ref{fig:overview} and Algorithm~\ref{alg:mces} provide an overview.

\begin{figure}[ht]
\centering
\begin{tikzpicture}[
  font=\scriptsize,
  node distance=3.5mm and 5mm,
  box/.style={draw=black!60, rounded corners=1pt, align=center, inner sep=3pt, minimum height=7mm},
  tier/.style={box, fill=black!4, text width=44mm},
  stage/.style={box, fill=blue!6, text width=20mm},
  arr/.style={-{Stealth[length=1.6mm]}, black!70, thick}
]
\node[stage] (data) {Panel data\\ $\mathbf{O} \in \R^{N \times T \times (M+K)}$};
\node[stage, below=6mm of data] (gs) {Structural graph $G_S$ (identities)};
\node[stage, right=5mm of $(data.east)!0.5!(gs.east)$] (decomp) {Structural--Behavioral Decomposition};
\node[tier, right=6mm of decomp, yshift=16mm] (t1) {\textbf{Tier 1 -- Associational}\\ Partial Corr.\ $\cdot$ Lasso $\cdot$ Distance Corr.};
\node[tier, below=2mm of t1] (t2) {\textbf{Tier 2 -- Predictive}\\ Mixed Effects $\cdot$ RF+SHAP $\cdot$ PPS};
\node[tier, below=2mm of t2] (t3) {\textbf{Tier 3 -- Temporal}\\ Granger $\cdot$ ITS $\cdot$ Transfer Entropy};
\node[tier, below=2mm of t3] (t4) {\textbf{Tier 4 -- Structural}\\ Bayesian Net $\cdot$ Causal Forest};
\node[draw=black!40, rounded corners=2pt, fit=(t1)(t4), inner sep=3pt, label={[font=\scriptsize]above:{Method suite (11 methods, 8 traditions)}}] (suite) {};
\node[stage, right=of suite] (pool) {Normalize $\phi_k$ $+$ weighted pool (Eq.~\ref{eq:ces})};
\node[stage, right=of pool] (out) {CES, calibration, convergence bands, ranked pairs};
\draw[arr] (data.east) -- (decomp);
\draw[arr] (gs.east) -- (decomp);
\draw[arr] (decomp) -- node[above=0.5mm, font=\tiny]{$\mathcal{P}$} (suite.west|-decomp);
\draw[arr] (suite.east|-pool) -- (pool);
\draw[arr] (pool) -- (out);
\end{tikzpicture}
\caption{The MCES pipeline. Definitional (identity) pairs are removed first; the surviving behavioral candidates are scored by eleven methods from eight mathematical traditions, normalized, and pooled into the Convergent Evidence Score.}
\label{fig:overview}
\end{figure}

\begin{algorithm}[ht]
\caption{Multi-Method Causal Evidence Synthesis (MCES)}
\label{alg:mces}
\begin{algorithmic}[1]
\Require Observational panel data $\mathbf{O} \in \R^{N \times T \times (M+K)}$, structural graph $G_S$
\Ensure CES matrix $\mathbf{C} \in [0,1]^{M \times K}$
\State $\mathcal{P} \gets \{(x_j, y_k)\} \setminus G_S$ \Comment{Remove structural pairs}
\For{each method $m_k \in \mathcal{M}$}
  \For{each pair $(x_j, y_l) \in \mathcal{P}$}
    \State $e_k(x_j, y_l) \gets m_k(\mathbf{O}, x_j, y_l)$ \Comment{Run method}
    \State $\tilde{e}_k(x_j, y_l) \gets \phi_k(e_k)$ \Comment{Normalize to $[0,1]$}
  \EndFor
\EndFor
\For{each pair $(x_j, y_l) \in \mathcal{P}$}
  \State $\CES(x_j, y_l) \gets \sum_{k} w_k \tilde{e}_k / \sum_{k} w_k$ \Comment{Weighted synthesis}
\EndFor
\State \Return $\mathbf{C}$
\end{algorithmic}
\end{algorithm}

\subsection{Step 1: Structural-Behavioral Decomposition}

Given domain knowledge, we identify all structural (definitional) relationships among outcome variables and remove them from the analysis space. This step requires a domain expert to enumerate algebraic identities. The process is:
\begin{enumerate}[nosep]
  \item Enumerate all outcome variables and their definitions.
  \item For each outcome pair $(y_k, y_l)$: if $y_k = f(y_l, \ldots)$ by definition, add edge $(y_l, y_k)$ to $G_S$.
  \item Remove all pairs involving $G_S$ edges from the candidate set $\mathcal{P}$.
\end{enumerate}

Without this step, methods will ``discover'' that $\text{Revenue} \sim \text{Average Price}$ (trivially true by multiplication) and inflate false positives.

\subsection{Step 2: The Method Suite}\label{sec:methods}

MCES runs eleven methods spanning the mathematical traditions listed above. We detail each method's formulation, unique contribution, and normalization; the implementation defaults (lags, cross-validation folds, discretization bins, tree counts) are those used in all experiments and are documented in the reference implementation.

\subsubsection{Method 1: Partial Correlation}
\emph{Tradition:} Classical statistics.

We control for \emph{all} other measured drivers simultaneously via the precision matrix $\mathbf{P} = \mathbf{R}^{-1}$, where $\mathbf{R}$ is the correlation matrix of $(\mathbf{X}, y_l)$ (regularized as $\mathbf{R} + 10^{-6}\mathbf{I}$ if near-singular). The partial correlation between $x_j$ and $y_l$ given the rest is
\begin{equation}
  r(x_j, y_l \mid \text{rest}) = \frac{-P_{jl}}{\sqrt{P_{jj}\,P_{ll}}},
\end{equation}
with significance from a $t$-test on $n - p$ degrees of freedom.\\
\emph{Unique contribution:} Simplest and most interpretable baseline, measuring conditional linear association after adjustment for the other included variables (not control for unmeasured confounders).\\
\emph{Normalization:} $\tilde{e}_1 = |r| \cdot \mathbb{I}[p < 0.05]$ (absolute, hard-gated).

\subsubsection{Method 2: Lasso Regression}
\emph{Tradition:} Regularized regression.

\begin{equation}
  \hat{\boldsymbol{\beta}} = \argmin_{\boldsymbol{\beta}} \left\{ \frac{1}{2n}\|\mathbf{y} - \mathbf{X}\boldsymbol{\beta}\|_2^2 + \lambda \|\boldsymbol{\beta}\|_1 \right\}.
\end{equation}
with $\lambda$ chosen by 5-fold cross-validation on standardized drivers.\\
\emph{Unique contribution:} Automatic elimination of irrelevant variables via the $\ell_1$ penalty.\\
\emph{Normalization:} $\tilde{e}_2 = |\hat{\beta}_j| / \max_{j'} |\hat{\beta}_{j'}|$ over drivers of the same outcome; $0$ for eliminated features. This is a \emph{within-outcome relative} scale (see Section~\ref{sec:norm}).

\subsubsection{Method 3: Distance Correlation}
\emph{Tradition:} Energy statistics.

Distance correlation \citep{szekely2007} measures dependence of arbitrary form. With $A$ and $B$ the double-centered pairwise-distance matrices of the samples of $x_j$ and $y_l$,
\begin{equation}
  \text{dCor}(x_j, y_l) = \frac{\text{dCov}(x_j, y_l)}{\sqrt{\text{dVar}(x_j)\,\text{dVar}(y_l)}}, \qquad \text{dCov}^2 = \frac{1}{n^2}\sum_{a,b} A_{ab} B_{ab},
\end{equation}
and $\text{dCor} = 0$ \emph{if and only if} $x_j$ and $y_l$ are independent, unlike Pearson-family measures, which can be zero under symmetric (e.g.\ U-shaped) dependence. Significance is assessed by a permutation test on $\text{dCov}$.\\
\emph{Unique contribution:} A Tier-1 associational measure that does not assume a specific functional form and can detect non-monotone dependence, subject to finite-sample power; it is the pool's designated detector of symmetric non-linear dependence, complementing the monotone-oriented classical measures.\\
\emph{Normalization:} $\tilde{e}_3 = \text{dCor} \cdot \mathbb{I}[p_{\text{perm}} < 0.05]$ ($\text{dCor}$ is already in $[0,1]$). Distance correlation is \emph{unsigned} by construction, and the non-monotone dependence it is designed to catch has no well-defined sign; we therefore treat its directional contribution as undefined and record a Pearson-sign annotation only as an interpretive hint valid under monotonicity, not as directional evidence.

\subsubsection{Method 4: Mixed-Effects Regression}
\emph{Tradition:} Panel econometrics.

\begin{equation}
  y_{it} = \mathbf{X}_{it}^\top \boldsymbol{\beta} + b_i + \varepsilon_{it}, \quad b_i \sim \mathcal{N}(0, \sigma_b^2),
\end{equation}
fitted by REML with a random \emph{intercept} $b_i$ per unit (shared slopes $\boldsymbol{\beta}$); an OLS fallback is used if the mixed model fails to converge.\\
\emph{Unique contribution:} Handles repeated measures from the same unit with unit-specific baselines.\\
\emph{Normalization:} $\tilde{e}_4 = \big(|\hat{\beta}_j| / \max_{j'} |\hat{\beta}_{j'}|\big) \cdot \mathbb{I}[p_j < 0.05]$ (within-outcome relative, hard-gated).

\subsubsection{Method 5: Random Forest + SHAP}
\emph{Tradition:} Machine learning with explainability.

A Random Forest $f$ is fitted to predict $y_l$ from $\mathbf{X}$. SHAP values \citep{lundberg2017} decompose each prediction:
\begin{equation}
  f(\mathbf{x}_i) = \phi_0 + \sum_{j=1}^{M} \phi_j(\mathbf{x}_i),
\end{equation}
where $\phi_j(\mathbf{x}_i)$ is the Shapley value of feature $j$ for observation $i$.\\
\emph{Unique contribution:} Captures non-linear relationships and feature interactions with no linearity assumption.\\
\emph{Normalization:} $\tilde{e}_5 = \overline{|\phi_j|} / \max_{j'} \overline{|\phi_{j'}|}$ (within-outcome relative). The magnitude is a mean absolute SHAP value and is inherently unsigned; where we report a direction we use the sign of the mean signed contribution $\overline{\phi_j}$, which is meaningful only for approximately monotone effects and is not defined for interaction-dominated or non-monotone importance.

\subsubsection{Method 6: Predictive Power Score}%
\label{sec:method_pps}
\emph{Tradition:} Machine learning (non-parametric single-feature prediction).

The Predictive Power Score \citep{wetschoreck2020} asks how well the single driver $x_j$ predicts $y_l$ out-of-sample. A shallow decision tree is fitted to predict $y_l$ from $x_j$ alone and evaluated by $K$-fold cross-validation against a naive median baseline:
\begin{equation}
  \text{PPS}(x_j \to y_l) = \max\left(0,\; 1 - \frac{\text{MAE}_{\text{tree}}}{\text{MAE}_{\text{median}}}\right).
\end{equation}
Unlike the symmetric association measures, PPS is \emph{asymmetric}: $\text{PPS}(x \to y)$ generally differs from $\text{PPS}(y \to x)$ (predicting $y=|x|$ from $x$ is easy; recovering $x$ from $y$ is not). We stress that this is \emph{predictive} asymmetry, not causal direction: it reflects the deterministic structure of the map, and a confounded or reverse-causal pair can be equally asymmetric, so we do not treat PPS asymmetry as evidence of which variable causes which.\\
\emph{Unique contribution:} Out-of-fold single-feature predictive skill at negligible cost; complements the multivariate methods, which can dilute a strong marginal predictor.\\
\emph{Normalization:} $\tilde{e}_6 = \text{PPS} \in [0,1]$ (already baseline-normalized; scores at or below the naive baseline are exactly $0$).

\subsubsection{Method 7: Granger Causality}
\emph{Tradition:} Time series econometrics.

Variable $x_j$ Granger-causes $y_l$ if past values of $x_j$ improve the prediction of $y_l$ beyond $y_l$'s own history:
\begin{equation}
  y_{l,t} = \sum_{\tau=1}^{p} \alpha_\tau y_{l,t-\tau} + \sum_{\tau=1}^{p} \gamma_\tau x_{j,t-\tau} + \varepsilon_t.
\end{equation}
An F-test assesses whether the $\gamma_\tau$ coefficients are jointly significant. We run the test per unit over lags $1,\ldots,3$ and pool the per-unit $p$-values across units by Fisher's method, $\chi^2 = -2\sum_i \ln p_i$. Taking the \emph{best} lag per unit inflates significance through selection, so we treat the resulting $p$-value as a screening statistic rather than a calibrated test; a joint test over all lags, or a multiplicity correction, would be more conservative and is preferable when a calibrated $p$-value is needed. Fisher pooling additionally assumes independence across units, which common shocks can violate; this is a known conservative-to-anticonservative trade-off we flag in the limitations.\\
\emph{Unique contribution:} Tests whether a driver's past values provide incremental predictive information about the outcome beyond the outcome's own past (predictive precedence, not proven causation).\\
\emph{Normalization:} $\tilde{e}_7 = (1 - p_{\text{pool}}) \cdot \mathbb{I}[p_{\text{pool}} < 0.05]$. Because $1-p$ conflates effect size with sample size, we interpret $\tilde{e}_7$ as evidence \emph{that} a temporal effect is nonzero rather than its magnitude.

\subsubsection{Method 8: Interrupted Time Series}
\emph{Tradition:} Quasi-experimental design.

Classical ITS regresses an outcome on time, a post-intervention indicator $D_t = \mathbb{I}[t \geq t_0]$, and their interaction. Because in our setting the ``treatment'' is a continuous driver rather than a single system-wide event, we use a \emph{driver-moderated} ITS: with $\tilde{x}_j$ the standardized driver,
\begin{equation}
  y_t = \beta_0 + \beta_1 t + \beta_2 D_t + \beta_3 (t-t_0)D_t + \beta_4 \tilde{x}_{j,t} + \beta_5 (\tilde{x}_{j,t} \cdot D_t) + \varepsilon_t,
\end{equation}
and score the interaction $\beta_5$: the change in the driver's association with the outcome after the break.\\
\emph{Unique contribution:} Leverages known structural breaks (e.g.\ software deployments, policy changes).\\
\emph{Normalization:} $\tilde{e}_8 = \big(|\hat{\beta}_5| / \max_{j'}|\hat{\beta}_5^{(j')}|\big)\cdot \mathbb{I}[p_{\beta_5} < 0.05]$ (within-outcome relative, hard-gated).\\
\emph{Note:} A genuine interrupted time series requires a substantively justified breakpoint. When a break point $t_0$ is declared (e.g.\ a deployment week) this method is a true ITS. When none is declared, the mid-window default $t_0 = \lfloor T/2 \rfloor$ makes it \emph{not} an intervention analysis but a \emph{structural-change diagnostic} (a test for a shift in the driver--outcome relationship across the observation window), and we interpret and label it as such. Absent a real breakpoint we do not read its output as an intervention effect; the true break should be supplied when known.

\subsubsection{Method 9: Transfer Entropy}
\emph{Tradition:} Information theory.

The transfer entropy from $x_j$ to $y_l$ measures directed information flow:
\begin{equation}
  \text{TE}_{x_j \to y_l} = \sum p\big(y_{t+1}, y_{t}^{(\kappa)}, x_{t}^{(\kappa)}\big) \log \frac{p\big(y_{t+1} \mid y_{t}^{(\kappa)}, x_{t}^{(\kappa)}\big)}{p\big(y_{t+1} \mid y_{t}^{(\kappa)}\big)},
\end{equation}
where superscript $(\kappa)$ denotes history length (we use $\kappa = 1$). We estimate the entropies with a plug-in estimator over $5$ quantile bins and obtain $p_{\text{perm}}$ from $50$ within-unit temporal permutations of $x_j$.\\
\emph{Unique contribution:} Captures non-linear directed information flow. \citet{barnett2009} proved equivalence with Granger causality \emph{only} for Gaussian variables; for non-Gaussian data, transfer entropy detects relationships Granger misses. (Conversely, on near-Gaussian data the two are redundant, a caveat for weighting, addressed in Section~\ref{sec:diversity}.)\\
\emph{Normalization:} $\tilde{e}_9 = (\text{TE} / \max_{j'} \text{TE}^{(j')}) \cdot \mathbb{I}[p_{\text{perm}} < 0.05]$. The plug-in estimator is biased upward in small samples; the permutation gate mitigates this by testing against a shuffled null. The permutation $p$-value uses the finite-sample form $(b+1)/(B+1)$, so with $B = 50$ its resolution is coarse (minimum attainable $p \approx 0.02$); transfer entropy's gate is therefore blunter than the analytic gates of other methods, and increasing $B$ is a straightforward refinement at additional compute cost.

\subsubsection{Method 10: Bayesian Network Structure Learning}
\emph{Tradition:} Probabilistic graphical models.

Using BIC-scored hill-climbing \citep{pearl2000, spirtes2000} over data discretized into $6$ bins, with drivers pre-selected to a top-$20$ multi-signal shortlist for tractability (the union of the top-$k$ drivers by maximum absolute correlation with any outcome, drivers with nonzero lasso coefficients, and the top-$k$ by mutual information, truncated to $20$ by a combined normalized score) and $\max$ in-degree $3$, we learn a high-scoring DAG $\hat{G}$:
\begin{equation}
  \hat{G} = \argmax_{G \in \mathcal{G}} \text{BIC}(\mathbf{O} \mid G).
\end{equation}
\emph{Unique contribution:} Provides a candidate graphical representation of adjacency and short directed paths (subject to Markov-equivalence non-identifiability, below).\\
\emph{Normalization:} $\tilde{e}_{10} = 1$ if a direct edge $x_j \to y_l$ exists in $\hat{G}$; $0.5$ if a directed path of length $2$ exists; $0$ otherwise. Hill-climbing returns a single member of a Markov equivalence class, so edge \emph{orientation} is only partially identified; the score should be read as adjacency-plus-orientation-under-the-learned-DAG, not as identified direction. Drivers outside the pre-selected shortlist are recorded as \emph{not evaluated} rather than as zero evidence (Section~\ref{sec:synthesis}).

\subsubsection{Method 11: Causal Forest}
\emph{Tradition:} Causal machine learning.

Using the framework of \citet{wageracthey2018}, we estimate heterogeneous treatment effects:
\begin{equation}
  \hat{\tau}(x) = \E[Y(1) - Y(0) \mid X = x],
\end{equation}
where $Y(1)$ and $Y(0)$ are potential outcomes under treatment and control.\\
We binarize each continuous driver at its median to form the treatment and use the remaining pre-selected drivers as controls; the reported statistic is the average treatment effect $\hat{\tau} = \E[\hat{\tau}(x)]$ (an econML \texttt{CausalForestDML}, with a $t$-test fallback).\\
\emph{Unique contribution:} \texttt{CausalForestDML} permits heterogeneous response surfaces across units, although the CES contribution we pool is the estimated average high-versus-low treatment contrast $\hat\tau$ rather than a direct measure of heterogeneity.\\
\emph{Normalization:} $\tilde{e}_{11} = (|\hat{\tau}| / \max_{j'} |\hat{\tau}^{(j')}|) \cdot \mathbb{I}[\text{95\% CI excludes } 0]$. Two cautions apply. Median binarization discards dose information, so $\hat{\tau}$ is a coarse high-vs-low contrast (drivers outside the top-$25$ shortlist are recorded as not evaluated). More importantly, treating each driver in turn as the treatment while using all remaining drivers as controls can, without a stated causal graph, condition on mediators, colliders, or post-treatment variables and thereby bias $\hat{\tau}$; we therefore read $\tilde{e}_{11}$ as one heterogeneity-sensitive evidence signal among many, not as an identified average treatment effect. Supplying a causal graph to choose valid adjustment sets is the principled fix and is left to future work.

\subsection{Step 3: Evidence Normalization}\label{sec:norm}

Each method produces outputs in different scales and types: correlation coefficients in $[-1,1]$, regression coefficients in $\R$, SHAP values in $\R$, $p$-values in $[0,1]$, entropy values in $\R^+$, binary graph edges, and treatment effects in $\R$. These are fundamentally \emph{different analytical quantities and evidence measures}, SHAP measures predictive attribution, Granger measures temporal predictability, Bayesian networks measure graphical dependence, and causal forests measure treatment effects. Normalization maps each to $[0,1]$, but this numerical commensurability does not imply semantic equivalence.

The normalized score $\tilde{e}_k(x_j, y_l)$ should therefore be interpreted as: ``how strongly does method $m_k$, through its particular analytical lens, indicate that driver $x_j$ is relevant to outcome $y_l$?'' CES then measures \emph{convergence of evidence} across these lenses, not the magnitude of any particular causal effect.

Two properties of our $\phi_k$ deserve emphasis because they qualify how CES may be read:
\begin{itemize}[nosep]
  \item \textbf{Within-outcome relative scaling.} Six of the eleven methods (lasso, mixed effects, RF+SHAP, ITS, transfer entropy, causal forest) divide by the maximum statistic \emph{across drivers of the same outcome}; distance correlation and PPS are natively $[0,1]$-scaled and are the exceptions among the associational/predictive tiers. Consequently $\tilde{e}_k$ ranks drivers reliably \emph{within} an outcome but is not comparable in absolute terms \emph{across} outcomes: an outcome whose strongest driver is weak still awards $\tilde{e}_k \approx 1$ to that driver. CES rankings and all our metrics are computed per-outcome-aware (Precision@$K$ over the pooled grid still holds because true edges score high within their own outcome), but practitioners should treat raw CES as an ordinal, within-outcome quantity unless calibrated (Section~\ref{sec:calibration}).
  \item \textbf{Significance and selection gates.} Several methods apply a significance or confidence-interval gate (an indicator $\mathbb{I}[p<0.05]$, or ``CI excludes $0$'', or edge-present). Methods without inferential $p$-values instead use their method-specific criteria: sparse selection (lasso), relative importance (SHAP), out-of-fold predictive performance (PPS), or graph presence (Bayesian network). Where a gate applies it enforces ``zero evidence when not significant'' but introduces a discontinuity at the threshold; the smoothness of the linear pool (Section~\ref{sec:theory}) therefore holds \emph{above} such gates, not through them. For methods that produce inferential $p$-values, MCES applies Benjamini--Hochberg adjustment \citep{benjamini1995} across the driver--outcome grid by default (Section~\ref{sec:exp_fp}).
\end{itemize}

Each $\phi_k$ is bounded in $[0,1]$ and non-decreasing in its underlying statistic; the gated methods return zero when their significance or selection criterion is not met.

\subsection{Step 4: Weighted Synthesis (CES Computation)}\label{sec:weights}\label{sec:synthesis}

Under the default, every method receives the same weight. For the optional tiered scheme, methods are grouped into four analytical tiers (Table~\ref{tab:weights}) that encode a contestable evidential preference for temporal and structural methods; this grouping affects only the optional scheme and, as Section~\ref{sec:sensitivity} shows, has no measurable effect on accuracy.

\begin{table}[ht]
\centering
\caption{Analytical groupings and optional tiered weights. The default is uniform; the tiered column is one alternative studied in the weight sensitivity analysis.}
\label{tab:weights}
\small
\begin{tabular}{@{}P{3.0cm}P{4.5cm}P{4.3cm}l@{}}
\toprule
\textbf{Tier} & \textbf{Methods} & \textbf{What They Provide} & \textbf{$w_k$} \\
\midrule
1: Associational & Partial Corr., Lasso, Distance Corr. & Associations, no direction & 0.06 \\
2: Predictive & Mixed Effects, RF+SHAP, PPS & Importance, structure & 0.05--0.08 \\
3: Temporal & Granger, ITS, Transfer Entropy & Direction and timing & 0.13--0.14 \\
4: Structural and treatment-effect & Bayesian Net, Causal Forest & Causal structure, effects & 0.10--0.11 \\
\bottomrule
\end{tabular}
\end{table}

The optional tiered weight vector (summing to 1.0, in method order) is:
\begin{equation}\label{eq:weights}
\mathbf{w} = (0.06,\; 0.06,\; 0.06,\; 0.07,\; 0.08,\; 0.05,\; 0.14,\; 0.14,\; 0.13,\; 0.11,\; 0.10).
\end{equation}

\paragraph{Uniform weights are the default; tiered weights are optional.} We recommend \emph{uniform} weights ($w_k = 1/11$) as the primary default, and we report all headline results under a weighting to which, as Section~\ref{sec:sensitivity} shows, the CES ranking is nearly invariant (mean pairwise Spearman \weightSpearman\ across schemes). The tiered vector in Equation~\ref{eq:weights} encodes a \emph{contestable} epistemic prior, that temporal and treatment-effect methods deserve more credence than purely associational ones. We are explicit that this prior is not established: temporal precedence, in particular, is not stronger identification than a credible treatment-effect design, so we do not order Tier~3 above Tier~4 on identification grounds, and Section~\ref{sec:sensitivity} shows the tiering yields no measurable accuracy gain over uniform. When all eleven methods evaluate a pair, the optional tiered vector has the interpretive property that Moderate-or-higher CES requires positive Tier~3 or Tier~4 evidence. This property is not guaranteed after renormalization in degraded modes. The framework therefore relies only on the explicit Tier~3/4 gate for Strong convergence (Section~\ref{sec:convclass}), which applies under every weighting scheme. Practitioners should use the uniform default; the tiered scheme is offered only for the sensitivity analysis.

The CES is computed via Equation~\ref{eq:ces}. By construction, $\CES \in [0, 1]$.

\paragraph{Missing methods: renormalization at the method level, zero-fill at the pair level.} Two kinds of missingness are handled differently, and the distinction matters for comparability. When a method is inapplicable to the dataset as a whole (e.g.\ ITS with no declared break on cross-sectional data), its weight is removed and the remaining weights renormalize, so degraded modes remain well-defined. When an applicable method skips an individual pair (e.g.\ a driver outside the BN or causal-forest pre-selection shortlist), that pair receives \emph{zero evidence from that method over the unchanged denominator}, a deliberately conservative choice: not being shortlisted lowers CES rather than inflating it by shrinking the denominator, and because the denominator is fixed given the active method set, CES is the same function on every pair; per-pair denominators never vary. Since shortlist membership is itself signal-dependent, this zero-fill is the safe direction for the missing-not-at-random concern. An ablation confirms the design is not load-bearing on the primary scenario, where all eleven methods are active: recomputing CES under the alternative convention yields identical rankings (Spearman \renormSpearman\ across 20 seeds, F1@10 \renormFOne\ vs.\ \zeroFillFOne). We track the count of evaluating methods separately from the count contributing positive evidence.

\paragraph{A weight-vector property, backed by an explicit gate.} Under the default uniform weights with all eleven methods evaluating a pair, the six Tier~1--2 (associational and predictive) methods carry mass $6/11 = 0.545 < 0.7$, so $\CES > 0.7$ cannot be reached on associational and predictive evidence alone. The optional tiered weights (Equation~\ref{eq:weights}) strengthen this: because their Tier~1--2 mass is only $0.38 < 0.4$, under tiered weights even \emph{Moderate} convergence ($\CES \ge 0.4$) would require a Tier~3/4 method, the sole thing tiering buys over uniform (Section~\ref{sec:sensitivity}). We note plainly that this is a property \emph{by construction}, not a finding: the tier constants are design parameters we chose, and a Tier~1--2 mass below the Moderate threshold is a direct consequence of that choice. We do \emph{not}, however, rely on the weight vector for this guarantee, because renormalization in degraded modes can break it (Section~\ref{sec:convclass} gives the cross-sectional counterexample, where seven methods apply and Tier~1--2 mass reaches $5/7 = 0.714 > 0.7$). The Tier~3/4 requirement for Strong convergence is therefore enforced as an explicit rule in the classifier (Section~\ref{sec:convclass}), which holds in every mode.

\paragraph{Applicability conditions and degraded modes.} Not every dataset supports every method, and the renormalization above makes degradation graceful rather than fatal. Cross-sectional data (no time dimension) disables the three temporal methods, as on the Sachs benchmark. A \emph{single-unit} time series ($N=1$), the common case of one organization observed daily, disables the panel-dependent mixed-effects method; the remaining ten methods, including all temporal ones, still apply, and weights renormalize over the active pool. The reference implementation detects these conditions and skips inapplicable methods automatically. Because the explicit Strong-convergence gate (Section~\ref{sec:convclass}) requires an applicable Tier~3/4 method regardless of the renormalized weights, the directional requirement is preserved in every degraded mode, including cross-sectional data, where the weight-vector argument alone would fail ($5/7 = 0.714 > 0.7$).

\subsection{Step 5: Convergence Classification}\label{sec:convclass}

We bin each driver-outcome pair into three \emph{convergence} levels. We use ``convergence'' rather than ``confidence'' deliberately: the bands describe how strongly the heterogeneous methods agree, not a probability of causation (Section~\ref{sec:norm} notes that raw CES is ordinal and within-outcome relative).
\begin{itemize}[nosep]
  \item \textbf{Strong Convergence}: $\CES > 0.7$ \emph{and} at least one applicable Tier~3 or Tier~4 method contributes positive evidence.
  \item \textbf{Moderate Convergence} ($0.4 \leq \CES \leq 0.7$): Some methods agree, mixed signals across tiers.
  \item \textbf{Weak Convergence} ($\CES < 0.4$): Weak or inconsistent evidence.
\end{itemize}

The $0.4$ and $0.7$ cutoffs are interpretive design thresholds, not calibrated probabilities or universally validated decision boundaries; their suitability should be evaluated for each target application. We enforce the Tier~3/4 requirement for Strong convergence as an \emph{explicit rule} in the classifier, not as a consequence of the weight vector. When all eleven methods evaluate a pair under uniform weights, the requirement also follows automatically because the Tier~1--2 mass is $6/11 = 0.545 < 0.7$; but under renormalization in degraded modes (e.g.\ cross-sectional data, where only seven methods apply and Tier~1--2 mass can reach $5/7 = 0.714 > 0.7$) the weight-vector argument alone would not hold, so the explicit gate is what preserves the intended interpretation across all modes. We also compute an inter-method \emph{agreement} metric as the fraction of \emph{applicable} methods producing moderate-or-higher evidence:
\begin{equation}
  \text{Agreement}(x_j, y_l) = \frac{1}{|\mathcal{M}_{\text{eval}}(x_j,y_l)|}\sum_{k \in \mathcal{M}_{\text{eval}}(x_j,y_l)} \mathbb{I}[\tilde{e}_k(x_j, y_l) > 0.5],
\end{equation}
where $\mathcal{M}_{\text{eval}}$ is the set of methods that evaluated the pair (so a driver dropped by a pre-selection step does not deflate the denominator).

\subsection{Scope note: decision rules are out of scope}\label{sec:scope_decision}

CES scores and ranks the \emph{evidence} for driver-outcome relationships. Turning that ranking into an action ordering requires external decision criteria (intervention feasibility, cost, risk, expected utility) that are domain-specific and not part of the methodological contribution we validate here. We therefore do not fold such criteria into CES; combining evidence convergence with a validated decision rule is deferred to future work (Section~\ref{sec:futurework}).

\section{Theoretical Justification}\label{sec:theory}

\subsection{MCES as a Committee of Diverse Experts}

We frame MCES using the lens of ensemble learning theory and the ``wisdom of crowds'' literature. Each method $m_k$ acts as an imperfect expert with its own biases (systematic errors from violated assumptions) and variance (sensitivity to noise). The key insight is not that these experts are \emph{independent} (they share the same data) but that they are \emph{diverse}: their errors arise from different mathematical assumptions.

\begin{definition}[Assumption Diversity]
Two methods $m_j$ and $m_k$ are \emph{assumption-diverse} if their failure modes are driven by different violated conditions. For example, Granger causality fails when relationships are non-linear (stationarity/linearity assumption), while SHAP fails when associations are non-causal (no causal identification assumption). A non-linear causal relationship will cause Granger to miss what SHAP detects; a spurious temporal correlation will cause Granger to flag what Causal Forest rejects.
\end{definition}

This diversity is the source of MCES's strength. When methods from different traditions agree, the agreement is informative precisely \emph{because} the methods fail in different ways.

\subsection{Why Pooling Could Help: A Score-Stability Argument}\label{sec:variance}

The argument in this subsection is a conditional, theoretical one: it states when pooling reduces the variance of the score, given low cross-method correlations. Section~\ref{sec:diversity} measures the proposition's own correlation quantity directly, across repeated draws of the data-generating process for fixed pairs, and finds it low ($\bar\rho \approx 0.13$); the step the theory does not supply, and the experiments do not automatically deliver, is from reduced score variance to improved ranking accuracy.

We analyze the estimator MCES actually uses, the weighted pool of Equation~\ref{eq:ces}, rather than a unanimity vote. Fix an outcome $y_l$ and a candidate driver $x_j$. Let $S = \sum_k w_k \tilde{e}_k$ with $\sum_k w_k = 1$ be the (renormalized) CES score, and treat each $\tilde{e}_k \in [0,1]$ as a random variable over resamples of the data-generating process. Write $\mu_k = \E[\tilde{e}_k]$, $\sigma_k^2 = \mathrm{Var}(\tilde{e}_k)$, and let $\rho_{jk}$ be the correlation between $\tilde{e}_j$ and $\tilde{e}_k$. Then
\begin{equation}\label{eq:poolvar}
  \E[S] = \sum_k w_k \mu_k, \qquad
  \mathrm{Var}(S) = \sum_k w_k^2 \sigma_k^2 + \sum_{j \neq k} w_j w_k \rho_{jk}\, \sigma_j \sigma_k .
\end{equation}

The mean of the pool is the weighted mean of the individual signals; the \emph{variance}, however, depends on the cross-method correlations $\rho_{jk}$. This is the crux of the diversity argument, stated for the real estimator:

\begin{proposition}[Diversity reduces score variance]\label{prop:main}
$\mathrm{Var}(S)$ is non-decreasing in every $\rho_{jk}$. With equal weights $w_k = 1/K$ and comparable per-method variances $\sigma_k^2 \approx \sigma^2$,
\begin{equation}
  \mathrm{Var}(S) \approx \frac{\sigma^2}{K}\big(1 + (K-1)\bar{\rho}\big),
\end{equation}
where $\bar{\rho}$ is the mean pairwise correlation. As $\bar{\rho} \to 1$ (redundant methods) the variance tends to $\sigma^2$, no better than one method; as $\bar{\rho} \to 0$ (assumption-diverse methods) it tends to $\sigma^2/K$.
\end{proposition}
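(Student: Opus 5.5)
The plan is to work directly from the variance identity in Equation~\ref{eq:poolvar}, which is just bilinearity of covariance applied to $S=\sum_k w_k\tilde e_k$, writing $\mathrm{Cov}(\tilde e_j,\tilde e_k)=\rho_{jk}\sigma_j\sigma_k$. The proof splits into a monotonicity claim, an equal-weight approximation, and two limits.

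\textbf{Monotonicity.} Treat $\mathrm{Var}(S)$ as a function of the off-diagonal correlations, with the weights and the marginal standard deviations held fixed. For $j\neq k$ the coefficient $\rho_{jk}$ appears twice in the double sum, once as $(j,k)$ and once as $(k,j)$. Hence
\[
\partial\,\mathrm{Var}(S)/\partial\rho_{jk}=2w_jw_k\sigma_j\sigma_k\ge 0,
\]
because the synthesis weights are nonnegative and standard deviations are nonnegative. So the variance is affine and non-decreasing in each $\rho_{jk}$.

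One point must be stated explicitly. The correlations cannot be varied freely: the matrix $(\rho_{jk})$ has to stay positive semidefinite. The monotonicity claim is therefore meant over admissible correlation matrices, with the marginals $\sigma_k$ fixed. That qualification is the only real subtlety, and I would flag it rather than try to resolve it.

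\textbf{Equal-weight formula.} Set $w_k=1/K$ and $\sigma_k=\sigma$. The diagonal terms give
\[
K\cdot\frac{\sigma^2}{K^2}=\frac{\sigma^2}{K}.
\]
The off-diagonal terms give
\[
\frac{\sigma^2}{K^2}\sum_{j\neq k}\rho_{jk}=\frac{\sigma^2}{K^2}\,K(K-1)\bar\rho,
\]
where $\bar\rho$ is defined as the mean over the $K(K-1)$ ordered pairs, or equivalently over unordered pairs. Adding the two gives
\[
\frac{\sigma^2}{K}\bigl(1+(K-1)\bar\rho\bigr).
\]
The formula is exact when the $\sigma_k$ are equal. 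When they are only approximately equal, the error is controlled by the spread of the $\sigma_k$ around $\sigma$. This is the sense in which the statement uses ``$\approx$''. It could be made precise by bounding $|\sigma_j\sigma_k-\sigma^2|$, but I would leave it as a first-order approximation.

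\textbf{Limits.} These follow by substituting into the closed form. Setting $\bar\rho=1$ gives $\frac{\sigma^2}{K}\cdot K=\sigma^2$. Setting $\bar\rho=0$ gives $\sigma^2/K$.

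It is worth noting that $\bar\rho$ cannot go below $-1/(K-1)$, since the variance must stay nonnegative. So the interesting range is $\bar\rho\in[-1/(K-1),1]$, and the limit $\bar\rho\to0$ lies inside it.

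\textbf{Main obstacle.} Conceptually this is the admissibility issue, namely varying one $\rho_{jk}$ while staying in the PSD cone. I would resolve it by stating monotonicity as a property of the quadratic form Equation~\ref{eq:poolvar} on its admissible domain. The algebra itself is routine.
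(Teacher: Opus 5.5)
Your proposal is correct and follows the paper's proof: differentiate Equation~\ref{eq:poolvar} in $\rho_{jk}$ to get $2w_jw_k\sigma_j\sigma_k\ge 0$, then substitute $w_k=1/K$ and $\sigma_k=\sigma$ to obtain the closed form and its two limits. The paper sets every $\rho_{jk}=\bar\rho$, whereas you average over the off-diagonal correlations, which makes the formula exact for arbitrary correlations once the $\sigma_k$ are equal; this, together with your PSD-admissibility caveat, is a small but welcome refinement.
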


\begin{proof}
$\partial\,\mathrm{Var}(S)/\partial \rho_{jk} = 2 w_j w_k \sigma_j \sigma_k \geq 0$ gives monotonicity; substituting $w_k = 1/K$, $\sigma_k = \sigma$, and $\rho_{jk} = \bar{\rho}$ ($j\neq k$) into Equation~\ref{eq:poolvar} gives the stated form.
\end{proof}

\begin{corollary}[Sharper separation of causal from null pairs]\label{cor:sep}
Suppose true pairs have mean pooled score $\mu_1$ and null pairs $\mu_0 < \mu_1$, with per-class pooled standard deviations $s_1, s_0$ that decrease as $\bar\rho$ decreases (Proposition~\ref{prop:main}). Then the standardized separation $(\mu_1 - \mu_0)/\sqrt{\tfrac12(s_1^2+s_0^2)}$ \emph{increases} as diversity increases, so, when the class-conditional score distributions otherwise remain comparable, class separation and the achievable precision/recall trade-off improve. The means $\mu_0,\mu_1$ are set by the methods' individual power and are unchanged by pooling; diversity buys its advantage through variance, not through inflating the signal.
\end{corollary}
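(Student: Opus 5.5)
The argument will stay at the level of Equation~\ref{eq:poolvar}. It proves the two claims of Corollary~\ref{cor:sep} separately: the means are invariant to the correlation structure, and the standardized separation is monotone in the correlations.

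\emph{The means.} By linearity of expectation, the class-conditional pooled means are
\[
\mu_c = \sum_k w_k \mu_k^{(c)}, \qquad c \in \{0,1\},
\]
where $\mu_k^{(c)}$ is method $k$'s mean normalized score on class $c$. These depend only on the per-method marginal means, which are fixed by each method's individual power. They do not involve any $\rho_{jk}$. So changing diversity while holding the marginals fixed leaves $\mu_1 - \mu_0 > 0$ unchanged. This is the ``diversity buys through variance, not signal'' clause.

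\emph{The variances and the separation.} Apply Proposition~\ref{prop:main} within each class: $s_c^2 = \mathrm{Var}(S \mid c)$ is given by Equation~\ref{eq:poolvar} with that class's $\sigma_k^{(c)}$ and $\rho_{jk}^{(c)}$. It is non-decreasing in every $\rho_{jk}^{(c)}$, since the partial derivative is $2w_jw_k\sigma_j\sigma_k \ge 0$. Hence
\[
\tfrac12\left(s_1^2 + s_0^2\right)
\]
is non-decreasing in each cross-method correlation. Its square root is monotone, and the numerator is a fixed positive constant. Therefore the standardized separation $d = (\mu_1-\mu_0)/\sqrt{\tfrac12(s_1^2+s_0^2)}$ is non-increasing in each $\rho_{jk}$, i.e.\ non-decreasing as diversity increases. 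Under the equal-weight approximation of Proposition~\ref{prop:main}, this gives the explicit form
\[
d \approx \frac{(\mu_1-\mu_0)\sqrt{K}}{\sigma\sqrt{1+(K-1)\bar\rho}},
\]
which is visibly decreasing in $\bar\rho$. Strict increase holds whenever the relevant $w_j w_k \sigma_j \sigma_k > 0$.

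\emph{The step to precision/recall.} This is the main obstacle, and it is where the hypothesis ``class-conditional distributions otherwise remain comparable'' must be made precise. I would formalize it as a location--scale family: $S \mid c \stackrel{d}{=} \mu_c + s_c Z$ for a fixed symmetric $Z$, most naturally Gaussian. Under equal scales, $s_0 = s_1$, the ROC curve is determined by $d$ alone, and larger $d$ yields a pointwise-dominating ROC curve, hence a better precision/recall trade-off at any fixed base rate. With unequal scales I would only claim improvement in the binormal AUC $\Phi\bigl(d/\sqrt{2}\bigr)$-type summary rather than full ROC dominance.

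Two caveats should be stated honestly. The gating discontinuities of Section~\ref{sec:norm} and the bounded support $[0,1]$ mean the location--scale model is an idealization, so the corollary is conditional on it. Also, ``as diversity increases'' means varying the $\rho_{jk}$ with the marginal means and variances held fixed, the same ceteris-paribus reading used in Proposition~\ref{prop:main}.
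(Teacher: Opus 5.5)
Your proposal is correct and follows the reasoning the paper leaves implicit, since the corollary has no separate proof: the class means $\mu_c=\sum_k w_k\mu_k^{(c)}$ do not involve the $\rho_{jk}$, while by Proposition~\ref{prop:main} each $s_c^2$ is non-decreasing in the correlations, so the standardized separation rises as diversity increases. Your location--scale formalization of the precision/recall step (ROC dominance when $s_0=s_1$, otherwise only the binormal AUC $\Phi(d/\sqrt{2})$) goes beyond the paper, which asserts that step only under the informal hedge that the class-conditional distributions ``otherwise remain comparable.''
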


This is a claim about \emph{ranking stability and separation}, not a guarantee that CES dominates every individual method on every dataset. Whether the ensemble's lower-variance score actually out-ranks the single best method depends on how much signal ($\mu_1-\mu_0$) the diverse-but-weaker methods contribute versus the noise they add, an empirical question we examine directly, including cases where the ensemble does \emph{not} win, in Section~\ref{sec:results}. Section~\ref{sec:diversity} provides a descriptive measure of how differently the methods rank candidate pairs; it does not directly estimate the resample-level correlations used in Proposition~\ref{prop:main}, and it discusses when redundancy (e.g.\ Granger and transfer entropy on Gaussian temporal data) would erode the benefit.

\subsection{What CES Measures (and Does Not Measure)}

It is essential to state precisely what CES quantifies:

\begin{itemize}[nosep]
  \item \textbf{CES measures:} the degree to which multiple analytical methods with different mathematical assumptions converge on the same driver-outcome relationship. High CES means that multiple methods spanning different analytical traditions point toward the same pair.
  \item \textbf{CES does not measure:} causal identification in the interventionist sense \citep{pearl2000}. CES does not establish that $\text{do}(x_j = x')$ would change $y_l$. Definitive causal claims require controlled experiments or instruments that MCES does not assume.
  \item \textbf{CES approximates:} the strength of \emph{convergent evidence} for \emph{causal relevance}, a pragmatic assessment that a driver-outcome relationship is worth prioritizing for investigation, not a probability that intervening on the driver would change the outcome.
\end{itemize}

This positioning is analogous to how meta-analysis provides ``strength of evidence'' rather than definitive proof: agreement across independent analyses increases confidence, but cannot eliminate all sources of bias.

\subsection{When MCES Fails: Shared Failure Modes}\label{sec:failures}

Assumption diversity protects against method-specific blind spots, but does not protect against \emph{shared} failure modes. We identify three scenarios where all methods can agree incorrectly:

\begin{enumerate}[nosep]
  \item \textbf{Hidden confounder.} An unmeasured variable $Z$ that causes both $x_j$ and $y_l$ will induce a spurious association that all methods detect. Partial correlation controls for \emph{measured} confounders; it cannot control for unmeasured ones. This is a fundamental limitation of all observational methods, not specific to MCES.
  \item \textbf{Feedback loops.} When $x_j$ causes $y_l$ and $y_l$ simultaneously causes $x_j$ (contemporaneous feedback), methods may incorrectly estimate direction. Granger causality and transfer entropy can partially detect bidirectional flow, but contemporaneous feedback remains challenging.
  \item \textbf{Collider bias.} Conditioning on a common effect of $x_j$ and $y_l$ can create a spurious association. If the conditioning variable is included in the analysis, all methods may report a false relationship.
\end{enumerate}

These failure modes are not unique to MCES, they affect every observational causal method. MCES reduces method-specific errors through assumption diversity but cannot eliminate data-level bias: if the data itself is confounded, all methods will reflect that confounding. MCES's advantage is that for the more common scenario where individual methods fail due to their \emph{specific} assumptions (linearity, stationarity, parametric form), diversity can reduce sensitivity to those method-specific failure modes and improve the stability of the pooled score. MCES should therefore be viewed as prioritizing hypotheses for intervention rather than replacing experimental causal identification.

\subsection{Comparison to Meta-Analysis and Ensemble Learning}

Meta-analysis combines \emph{results} from different \emph{studies} of the same question. MCES combines \emph{methods} on the same \emph{data} for the same question. Both leverage the principle that agreement across diverse analyses increases confidence.

The connection to ensemble learning is also instructive. Random forests aggregate decision trees that are diverse due to feature subsampling; boosting aggregates weak learners that focus on different error regions. MCES aggregates analytical methods that are diverse due to fundamentally different mathematical assumptions. The mechanism is analogous: diversity of errors can reduce sensitivity to method-specific failure modes and improve the stability of the pooled score, even when components share the same underlying data.

Formally, CES (Equation~\ref{eq:ces}) is an instance of \emph{linear opinion pooling}, a well-studied aggregation rule \citep{stone1961, genest1986}: $K$ experts provide assessments $p_1,\ldots,p_K$ combined via $p = \sum_k w_k p_k$, $\sum_k w_k = 1$. Linear pooling uniquely satisfies the unanimity-preservation and marginalization properties among a broad class of rules. Two caveats apply to our use of it. First, the $\tilde{e}_k$ are normalized evidence scores, not probabilities, so the pool inherits their ordinal, within-outcome character (Section~\ref{sec:norm}); the isotonic mapping of Section~\ref{sec:calibration} has an empirical probability interpretation only within the calibration distribution, and its validity for a new target domain requires separate evaluation. Second, while the pool is smooth in the $\tilde{e}_k$, each $\tilde{e}_k$ itself contains a hard significance gate, so CES is \emph{not} globally continuous in the underlying statistics, the smoothness holds above the gates. This differs from Super Learner \citep{vanderlaan2007}, which learns weights by cross-validated minimization of a single predictive loss for one estimand; our weights are fixed epistemic priors over methods targeting different estimands, and Section~\ref{sec:sensitivity} shows the ranking is insensitive to them.

\section{Experimental Setup}\label{sec:validation}

All numbers reported in this section are produced by our reference implementation; the exact scripts that generate every table are deterministic given the reported seeds, and are available from the authors on request. Method results are cached so the full suite is reproducible.

\subsection{Synthetic Ground Truth Design}

We generate observational panel data with \textbf{known} embedded causal relationships. A scenario fixes $N$ units, $T$ periods, a set of true causal edges (driver, outcome, sign, strength, lag, functional form), and a data-generating process (DGP). Five DGPs are implemented so that recovery is not tested under a single functional form: linear, non-linear (quadratic/$\sqrt{}$/log/threshold), confounded (hidden common causes), time-lagged, and mixed. Ground-truth edge strengths lie in $[0.15, 0.6]$; unit-specific intercepts and Gaussian noise are added. The primary scenario (\texttt{primary\_panel}) has $N=23$, $T=20$, $95$ numeric candidate drivers, $6$ outcomes, and $18$ true edges. We additionally use \texttt{nonlinear} ($N=50, T=30$) and \texttt{confounded} ($N=30, T=20$) scenarios, and two further domains described below.

\subsection{Benchmarks and Additional Domains}

To reduce the circularity of validating only on self-designed synthetic data, we add an external real-data benchmark, a suite of named structure-learning benchmarks, and two independently-specified synthetic domains:

\begin{itemize}[nosep]
  \item \textbf{Sachs protein-signaling dataset} \citep{sachs2005}, \emph{real} flow-cytometry measurements of $11$ phosphoproteins ($853$ observational cells), with the widely-used consensus network as ground truth. It is cross-sectional, so only the seven non-temporal, non-panel methods apply; under the driver/outcome orientation we use, $16$ of the $17$ consensus edges are scoreable. This is the one non-synthetic dataset in our evaluation, and its DGP was certainly not designed with our methods in mind.
  \item \textbf{Bayesian-network structure benchmarks}, six standard \texttt{bnlearn} networks \citep{scutari2010} spanning distinct domains and $20$ to $76$ nodes: \textsc{Child} ($20$, congenital heart disease), \textsc{Insurance} ($27$, actuarial risk), \textsc{Alarm} ($37$, ICU monitoring), \textsc{Hailfinder} ($56$, severe-weather forecasting), \textsc{Hepar2} ($70$, hepatology), and \textsc{Win95pts} ($76$, fault diagnosis), each with an \emph{exact published DAG}. We forward-sample $1{,}000$ observations, ordinal-encode the discrete states, and (as with Sachs) assign each node a single driver/outcome role by net edge direction, so the scoreable ground truth is the set of forward edges. These test edge recovery across a ladder of network sizes and across six unrelated domains, with structure ground truth entirely external to our framework. The data are sampled rather than field-collected; the \emph{structure} is the real, published, citable object.
  \item \textbf{Healthcare} ($40$ hospitals $\times$ $24$ months, mixed DGP with hidden confounders, $10$ true edges) and \textbf{Manufacturing} ($30$ lines $\times$ $40$ shifts, non-linear DGP, $9$ true edges), synthetic domains with their own driver/outcome catalogs and structural identities (\texttt{adjusted\_mortality}; \texttt{yield} $= 1 - $ \texttt{defect\_rate}). These test whether the framework and its conclusions transfer to differently-structured problems; we label them clearly as \emph{synthetic case studies}, not field deployments.
\end{itemize}

We deliberately do \emph{not} claim driver-ranking results on IHDP, LaLonde, Twins, or ACIC: those are single-treatment \emph{effect-estimation} benchmarks (one designated treatment with a known ATE), not structure-recovery tasks, so using them to score a ranked edge list would misrepresent both them and our method. Our IHDP loader is additionally a synthetic reconstruction. Extending MCES to effect-estimation baselines is left to future work.

\subsection{Evaluation Metrics}

Precision@$K$ and Recall@$K$ over the pooled driver--outcome grid; F1@$K$; Spearman's $\rho$ between CES and true edge strengths; expected calibration error (ECE) of CES against the empirical true-positive rate; and false-positive rate among null (non-causal) pairs. The primary scenario (\texttt{primary\_panel}) is averaged over $20$ seeds; the controlled decomposition scenario (E4) uses $3$ seeds and the \texttt{nonlinear} scenario a single seed, as noted per result. All averaged results are reported as mean\,$\pm$\,standard deviation.

\paragraph{Resampling policy and panel dependence.} The panel-aware methods respect the unit/time structure: mixed effects uses unit-level random intercepts, Granger and interrupted time series run per unit or per series, and transfer entropy permutes within units. The remaining associational and predictive methods (partial correlation, lasso, distance correlation, PPS, and RF+SHAP) pool observations across units and time and treat rows as exchangeable: their internal resampling (the partial-correlation $t$-test degrees of freedom, lasso and PPS $K$-fold cross-validation, and the distance-correlation permutation null) does \emph{not} adjust for within-unit or temporal dependence. Under such dependence the effective sample size is smaller than the nominal row count, so these methods' $p$-values and cross-validated scores are best read as \emph{screening statistics} rather than calibrated tests. Group-aware resampling (grouped/blocked cross-validation, within-unit permutation, cluster-robust inference) would tighten this and is left to future work.

\subsection{Experiments}

We run: \textbf{E1} ensemble vs.\ each individual method and leave-one-out ablation (\texttt{primary\_panel}); \textbf{E2} weight-scheme sensitivity (five schemes); \textbf{E3} a sample-size sweep over panel length and panel width around the primary setting; \textbf{E4} Structural--Behavioral Decomposition impact (controlled identity scenario); \textbf{E5} non-linear detection (\texttt{nonlinear}); calibration (E6), method-diversity (E7), and false-positive/FDR control with a CES-threshold sensitivity sweep (E8); \textbf{E9} out-of-sample predictive lift, a ground-truth-free consistency check; \textbf{E10} structure recovery on the \texttt{bnlearn} network benchmarks; plus the Sachs benchmark and the two additional domains.

\section{Results}\label{sec:results}

\subsection{E1: Does the Framework Identify the True Drivers?}\label{sec:results_e1}

The framework's job is to place genuinely causal driver--outcome pairs at the top of the CES ranking. It does: on \texttt{primary\_panel} the ensemble attains \textbf{Precision@5 $=$ \ensemblePfive} and \textbf{Precision@10 $=$ \ensemblePten} (mean over 20 seeds), the highest-scoring pairs are true edges, which is the outcome a practitioner cares about. This is the primary result.

A natural question is whether one could instead just pick a single method. Table~\ref{tab:e1} and Figure~\ref{fig:e1} report F1@10 for the ensemble and each method, and the answer is nuanced and worth stating plainly. We are also explicit about how this claim evolved: we initially expected the pooled score to \emph{outperform} the best individual method on accuracy, and it does not; the robustness framing reported below is what the experiments actually support, not what we set out to show. On this particular scenario a single method can match or edge out the pool on F1 (ensemble \ensembleFOne\ vs.\ best individual \bestIndivFOne; difference \ensembleVsBest). But this is not an argument for method selection, because \emph{which} method is best is not knowable in advance and changes across scenarios (Sections~\ref{sec:results_e5}, \ref{sec:results_domains}): the temporal methods that lead on lagged data are near-useless on this contemporaneous scenario, and vice versa. The ensemble's role is to be a method-agnostic default that avoids committing in advance to a single analytical tradition. We do not claim it is mathematically guaranteed to be optimal; we claim it removes the need for a prior method choice that, as the cross-scenario results show, is easy to get wrong. Leave-one-out ablation confirms the pool is genuinely distributed: removing any single method changes F1@10 by at most \maxLooDrop, the signature of a diversified estimator rather than one load-bearing method.

\begin{table}[ht]
\centering
\caption{E1, F1@10 on \texttt{primary\_panel} (mean\,$\pm$\,s.d.\ over 20 seeds). The ensemble is competitive with, but does not strictly dominate, the best individual method; its value is robustness across scenarios (Section~\ref{sec:results_e5}, \ref{sec:results_domains}).}
\label{tab:e1}
\small
\begin{tabular}{@{}lc@{}}
\toprule
\textbf{Method} & \textbf{F1@10} \\
\midrule
Ensemble (MCES) & \textbf{0.686} $\pm$ 0.035 \\
\midrule
Partial Corr. & 0.711 $\pm$ 0.016 \\
Lasso & 0.714 $\pm$ 0.000 \\
Distance Corr. & 0.475 $\pm$ 0.088 \\
PPS & 0.129 $\pm$ 0.029 \\
RF+SHAP & 0.539 $\pm$ 0.070 \\
Mixed Effects & 0.507 $\pm$ 0.166 \\
Granger & 0.107 $\pm$ 0.070 \\
ITS & 0.014 $\pm$ 0.029 \\
Transfer Entropy & 0.007 $\pm$ 0.021 \\
Bayesian Net & 0.321 $\pm$ 0.073 \\
Causal Forest & 0.521 $\pm$ 0.064 \\
\bottomrule
\end{tabular}

\end{table}

\begin{figure}[ht]
\centering
\includegraphics[width=0.62\textwidth]{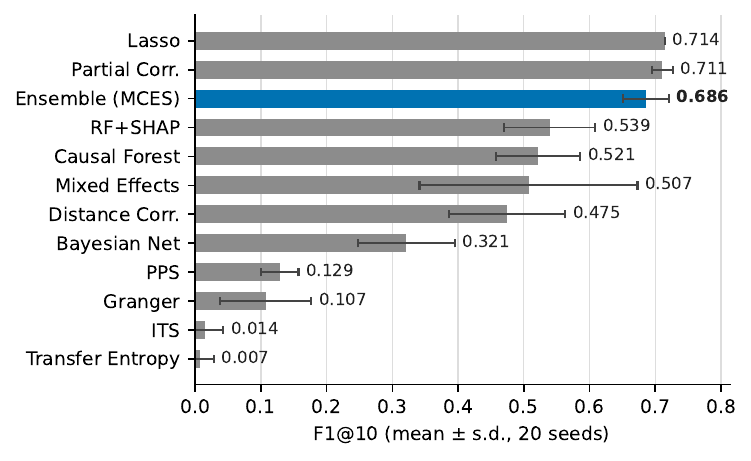}
\caption{E1, F1@10 of the ensemble (blue) against each individual method (gray) on \texttt{primary\_panel}. Error bars are s.d.\ over 20 seeds. The pool sits at the top of the range without depending on any single member.}
\label{fig:e1}
\end{figure}

\subsection{External Baselines: Structure Learning and Learned Weights}\label{sec:baselines}

The comparisons above are internal: the pool against its own members. Two external baselines test whether a different tool, or a smarter combination rule, dominates the pooled score (Table~\ref{tab:baselines}).

\paragraph{NOTEARS.} We fit linear NOTEARS \citep{zheng2018}, a continuous-optimization structure learner, on the pooled standardized panel of each scenario and rank the driver--outcome block of its weighted adjacency matrix by $|W_{jk}|$, granting it the same orientation restriction the MCES methods receive and scoring it with the same metrics. The pattern mirrors, at the level of an external state-of-the-art tool, exactly what Section~\ref{sec:results_e1} found internally: NOTEARS wins where its linearity assumption holds, edging the pool on the linear primary scenario ($0.70$ vs.\ \cesPrimaryFBl) and leading clearly on the (linear, hidden-confounder) \texttt{confounded} scenario ($0.82$ vs.\ $0.73$), and it fails hard where the assumption breaks, collapsing to \ntNonlinF\ against the pool's \cesNonlinF\ on \texttt{nonlinear} and trailing on healthcare ($0.80$ vs.\ $1.00$) and manufacturing ($0.63$ vs.\ $0.74$). On the real Sachs data both reach Precision@5 $= 1.0$ (NOTEARS is stronger deeper in the list, $0.80$ vs.\ $0.70$ at Precision@10). No tool dominates; the strongest external baseline we tried is, like the strongest internal member, scenario-dependent in exactly the way the method-agnostic-default argument predicts. NOTEARS also inherits the panel-dependence caveat of Section~\ref{sec:validation}: it treats pooled rows as exchangeable.

\paragraph{Learned weights.} We train a logistic-regression combiner on the eleven normalized method scores under the same rotating folds as the calibration experiment, a supervised upper reference that sees ground-truth labels the fixed-weight pool never uses. It attains Precision@10 $= 1.0$ on every held-out seed, for held-out F1@10 \stackedF\ against the fixed-weight \stackedCesF\ on identical seeds. The gain from supervision is thus about $+0.03$ F1, consistent with the weight-insensitivity of Section~\ref{sec:sensitivity}: learning the weights buys little beyond uniform pooling on this task, and doing so requires labeled causal ground truth that real deployments do not have.

\paragraph{Simple aggregation rules.} A skeptic may ask whether the weighted-pool formulation itself matters, or whether any agreement rule over the same eleven scores would do. We test three: a vote count of methods with nonzero gated evidence ($0.68$), the mean rank across methods ($0.69$), and the median normalized score ($0.70$), all within one standard deviation of CES ($0.69$) on the primary scenario (Table~\ref{tab:baselines}). We report this plainly: on this scenario, the specific synthesis formula is not the source of the performance, the multi-lens agreement is, and the weighted pool's advantages are operational rather than accuracy-based (interpretable weights, applicability-aware renormalization, and the convergence-band semantics of Section~\ref{sec:convclass}).

\begin{table}[ht]
\centering
\caption{External baselines, F1@10. The baseline in the scenario rows is NOTEARS-linear \citep{zheng2018}, fit per scenario on the pooled panel; in the final row it is a supervised logistic combination of the eleven method scores (rotating folds; superscript a: trained on ground-truth labels, an upper reference available only when labels exist).}
\label{tab:baselines}
\small
\begin{tabular}{@{}lcc@{}}
\toprule
\textbf{Scenario} & \textbf{MCES F1@10} & \textbf{Baseline F1@10} \\
\midrule
Primary panel (20 seeds) & 0.69 $\pm$ 0.04 & 0.70 $\pm$ 0.04 \\
Non-linear & 0.74 $\pm$ 0.00 & 0.42 $\pm$ 0.00 \\
Confounded & 0.73 $\pm$ 0.00 & 0.82 $\pm$ 0.00 \\
Healthcare & 1.00 $\pm$ 0.00 & 0.80 $\pm$ 0.00 \\
Manufacturing & 0.74 $\pm$ 0.00 & 0.63 $\pm$ 0.00 \\
\midrule
Stacked logistic (rotating folds) & 0.69 $\pm$ 0.04 & 0.71 $\pm$ 0.00\textsuperscript{a} \\
Vote count (methods $> 0$) & 0.69 $\pm$ 0.04 & 0.68 $\pm$ 0.04 \\
Mean rank across methods & 0.69 $\pm$ 0.04 & 0.69 $\pm$ 0.04 \\
Median normalized score & 0.69 $\pm$ 0.04 & 0.70 $\pm$ 0.03 \\
\bottomrule
\end{tabular}

\end{table}

\subsection{E2: Robustness to Weights}\label{sec:sensitivity}

Table~\ref{tab:e2} varies the weighting across five schemes (uniform, tiered, tier-heavy, associational-heavy, causal-only). The mean pairwise Spearman correlation between CES rankings is \weightSpearman\ (minimum \weightSpearmanMin): the rankings remain broadly similar across schemes, although F1@10 does change under the extreme schemes and declines under the causal-only weighting. The best-performing scheme is in fact the \emph{associational-heavy} one, and causal-only is worst; this is informative about what the score is doing on this scenario, where effects are contemporaneous: CES's strength here comes primarily from robust relevance detection across lenses, not from privileged causal identification, consistent with the interpretation of Section~\ref{sec:norm}. The tiered scheme provides no measurable accuracy advantage over uniform weighting. Its only distinct property is that, when all eleven methods are active, Moderate-or-higher CES requires Tier~3 or Tier~4 evidence; because renormalization can remove that guarantee, uniform weighting remains the default and the Strong-convergence directional requirement is enforced explicitly (Section~\ref{sec:convclass}). We state the invariance plainly rather than present it as a benefit of tiering.

\begin{table}[ht]
\centering
\caption{E2, F1@10 by weight scheme on \texttt{primary\_panel}, with mean pairwise rank correlation between schemes.}
\label{tab:e2}
\small
\begin{tabular}{@{}lc@{}}
\toprule
\textbf{Weight scheme} & \textbf{F1@10} \\
\midrule
uniform & 0.686 $\pm$ 0.035 \\
tiered & 0.646 $\pm$ 0.042 \\
tier heavy & 0.629 $\pm$ 0.058 \\
associational heavy & 0.707 $\pm$ 0.021 \\
causal only & 0.564 $\pm$ 0.081 \\
\midrule
\multicolumn{2}{@{}l}{Mean pairwise rank corr. $\rho = 0.945$ (min 0.814)} \\
\bottomrule
\end{tabular}

\end{table}

\subsection{E3: Sample-Size Sensitivity}\label{sec:results_e3}

Several limitations stated in this paper attribute method weakness to sample size; E3 measures that dependence directly rather than asserting it. We sweep panel length ($T \in \{5, 10, 20, 40\}$ at $N=23$) and panel width ($N \in \{6, 12, 23, 46\}$ at $T=20$) on the primary scenario, three seeds per cell (Table~\ref{tab:e3}). The results corrected our own expectation: we anticipated that short time series would be the binding constraint, and they are not, on this scenario. All eleven methods remain nominally active in every cell (the applicability gates do not trigger), so degradation is a matter of statistical power, not method dropout. Panel length costs comparatively little: F1@10 falls only from \eThreeLongTF\ at $T=40$ to \eThreeShortTF\ at $T=5$, and the metric saturates by $T=20$. Panel \emph{width} is the binding constraint: at $N=6$ the ensemble attains only F1@10 $=$ \eThreeNarrowNF\ (with high seed-to-seed variance), while at $N=46$ it reaches \eThreeWideNF\ with Precision@10 $= 1.00$ across all seeds. This is consistent with the scenario's structure: the true effects are contemporaneous and linear with unit-specific intercepts, so cross-sectional replication is the scarce resource, and adding periods beyond $T \approx 20$ adds little. On temporally-lagged data the roles would plausibly reverse; we measured this scenario, and the claim is scoped to it.

\begin{table}[ht]
\centering
\caption{E3, sample-size sweep on \texttt{primary\_panel} (mean\,$\pm$\,s.d.\ over 3 seeds). Top block: panel length $T$ at $N=23$; bottom block: panel width $N$ at $T=20$. ``Methods'' is the mean number of methods that ran; all eleven remain active in every cell, so the degradation is statistical power, not applicability.}
\label{tab:e3}
\small
\begin{tabular}{@{}rrrrcc@{}}
\toprule
$N$ & $T$ & Obs. & Methods & \textbf{P@10} & \textbf{F1@10} \\
\midrule
23 & 5 & 115 & 11.0 & 0.83 $\pm$ 0.09 & 0.60 $\pm$ 0.07 \\
23 & 10 & 230 & 11.0 & 0.87 $\pm$ 0.05 & 0.62 $\pm$ 0.03 \\
23 & 20 & 460 & 11.0 & 0.97 $\pm$ 0.05 & 0.69 $\pm$ 0.03 \\
23 & 40 & 920 & 11.0 & 0.97 $\pm$ 0.05 & 0.69 $\pm$ 0.03 \\
\midrule
6 & 20 & 120 & 11.0 & 0.47 $\pm$ 0.17 & 0.33 $\pm$ 0.12 \\
12 & 20 & 240 & 11.0 & 0.83 $\pm$ 0.12 & 0.60 $\pm$ 0.09 \\
23 & 20 & 460 & 11.0 & 0.97 $\pm$ 0.05 & 0.69 $\pm$ 0.03 \\
46 & 20 & 920 & 11.0 & 1.00 $\pm$ 0.00 & 0.71 $\pm$ 0.00 \\
\bottomrule
\end{tabular}

\end{table}

\subsection{E4: Structural--Behavioral Decomposition}\label{sec:exp_decomp}

We construct the exact setting Section~\ref{sec:decomp} identifies as the danger zone: identity components \texttt{price} and \texttt{volume} are candidate drivers, the outcome \texttt{revenue}$=$\texttt{price}$\times$\texttt{volume} is a pure algebraic identity with no behavioral cause, and all genuine causal structure lives in a separate outcome. Table~\ref{tab:e4} shows the effect. Without decomposition, the two identity pairs occupy the top of the ranking and top-of-list precision (Precision@3) is only \decompPthreeWithout; removing the identity edges raises it to \decompPthreeWith, with zero identity pairs surviving in the top~5. We state the scope of this gain precisely: Precision@5 is unchanged ($0.600$ with and without), so the effect is specifically the removal of the two tautological pairs from the top ranks, not a general accuracy improvement, which is exactly what the mechanism predicts. This isolates the contribution: decomposition matters precisely, and only, when components and their algebraic composite are both candidates, consistent with the scope stated in Section~\ref{sec:decomp}, and a no-op otherwise (as it is on \texttt{primary\_panel}, where identity components are not drivers).

\begin{table}[ht]
\centering
\caption{E4, Structural--Behavioral Decomposition on a controlled identity scenario (mean over 3 seeds).}
\label{tab:e4}
\small
\begin{tabular}{@{}lcc@{}}
\toprule
 & \textbf{Without SBD} & \textbf{With SBD} \\
\midrule
Precision@3 & 0.333 & 1.000 \\
Precision@5 & 0.600 & 0.600 \\
Identity pairs in top 5 & 2.0 & 0.0 \\
\bottomrule
\end{tabular}

\end{table}

\subsection{E5: Non-Linear Detection and Why Diversity Helps}\label{sec:results_e5}

The \texttt{nonlinear} scenario is a \emph{single-run diagnostic} (one seed), reported without a standard deviation. On it (Table~\ref{tab:e5}), \emph{monotone} non-linear edges ($\sqrt{}$, log) are recovered by most methods, including partial correlation, while the \emph{symmetric} forms (quadratic, threshold) are missed broadly, so the per-method recalls cluster and the ensemble's recall (\nlEnsembleRecall) \emph{matches} the strongest individual methods rather than exceeding them. Transfer entropy and ITS contribute least here (near-zero recall), consistent with their sensitivity to sample size and to the absence of a declared break. The honest reading reinforces the robustness thesis without overstating it: the pool attains the best available recall \emph{without the analyst having to know in advance which method that is}, but it does not manufacture detection power that no member possesses. Distance correlation is the member designed for symmetric dependence, and its inclusion is the pool's main defense here; where even it lacks power at this sample size, symmetric non-linearities remain hard for the entire pool.

\begin{table}[ht]
\centering
\caption{E5, recall of non-linear edges (top-$n_{\text{true}}$) on the \texttt{nonlinear} scenario.}
\label{tab:e5}
\small
\begin{tabular}{@{}lc@{}}
\toprule
\textbf{Method} & \textbf{Recall of non-linear edges} \\
\midrule
Ensemble (MCES) & \textbf{0.667} \\
\midrule
Partial Corr. & 0.667 \\
Lasso & 0.667 \\
Distance Corr. & 0.667 \\
PPS & 0.444 \\
RF+SHAP & 0.667 \\
Mixed Effects & 0.444 \\
Granger & 0.556 \\
ITS & 0.111 \\
Transfer Entropy & 0.000 \\
Bayesian Net & 0.556 \\
Causal Forest & 0.444 \\
\bottomrule
\end{tabular}

\end{table}

\subsection{Calibration}\label{sec:calibration}

Raw CES is an ordinal, within-outcome score (Section~\ref{sec:norm}); we ask how far it is from a probability, and evaluate this \emph{on held-out data with rotating splits} rather than a single hand-chosen partition. We fit an isotonic regression $\hat{c}: \CES \mapsto \widehat{\Prob}(\text{true edge})$ under \calFolds-fold cross-validation over the twenty \texttt{primary\_panel} seeds: each fold trains on the other sixteen seeds plus the \texttt{nonlinear} and \texttt{confounded} scenarios and evaluates on its four held-out seeds (pooled base rate of true edges \calBaseRate). We lead with the Brier score, because at a base rate this low the expected calibration error is a weak metric (a calibrator that outputs the base rate everywhere scores well on ECE by construction). Across folds, calibration improves the held-out Brier score from \calBrierRaw\ to \calBrierCal; held-out ECE moves from \calECE\ to \calECEcal. Figure~\ref{fig:calibration} shows the reliability curves over the pooled held-out predictions of all folds. We stress that this is a \emph{within-distribution} result: the map is scenario-specific, and its transportability to a genuinely new target domain is not established and would require separate evaluation there.

\begin{figure}[ht]
\centering
\includegraphics[width=0.42\textwidth]{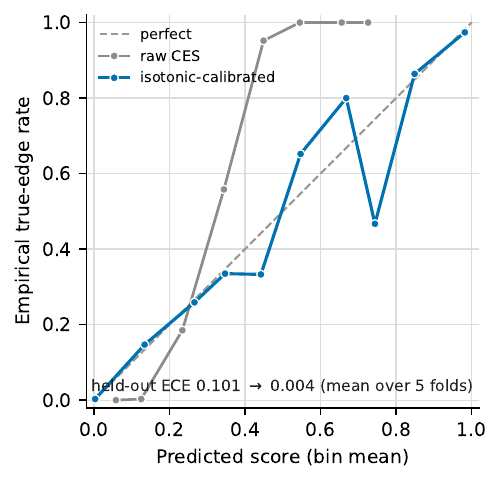}
\caption{Held-out reliability diagram: raw CES (gray) and isotonic-calibrated CES (blue) against the empirical true-edge rate, over the pooled held-out predictions of all rotating folds (no fold's calibrator sees its own test seeds). The dashed line is perfect calibration; the calibrated curve lies closer to it.}
\label{fig:calibration}
\end{figure}

\subsection{Method Diversity}\label{sec:diversity}

The variance argument of Section~\ref{sec:variance} predicts a benefit only if the methods are actually diverse. As a descriptive indication, on \texttt{primary\_panel} the mean pairwise correlation of per-pair normalized scores is \meanMethodCorr\ (max \maxMethodCorr), which shows the methods rank candidate pairs quite differently. The variance proposition, however, concerns correlation across \emph{repeated draws of the data-generating process for a fixed pair}, a different quantity, and the twenty independent seeds of the primary scenario are exactly such repeated draws. Measuring the proposition's own quantity, for each of the $570$ candidate pairs we correlate every method pair's scores across the twenty seeds and average: the fixed-pair mean cross-method correlation is $\bar\rho = 0.13$, even lower than the cross-pair diagnostic. The low-correlation premise of Proposition~\ref{prop:main} therefore holds on this scenario by direct measurement, not assumption; what remains unestablished is the further step from reduced score variance to improved ranking accuracy, which the E1 results show is not automatic. Notably, the Granger--transfer-entropy correlation is only \grangerTeCorr\ here: although \citet{barnett2009} show the two coincide for Gaussian data, on this predominantly contemporaneous scenario both temporal methods are near-inactive, so their empirical scores do not co-move. The redundancy that theory warns about is thus data-dependent (it would appear on temporally-rich, near-Gaussian data), which is precisely why we recommend measuring $\bar\rho$ per dataset rather than assuming a fixed redundancy structure.

\subsection{Empirical False-Positive Behavior and Threshold Sensitivity}\label{sec:exp_fp}

We report false-positive control at two levels and make Benjamini--Hochberg (BH) gating the primary per-method setting rather than an option. Among null (non-causal) pairs on \texttt{primary\_panel}, the mean per-method false-positive rate (nonzero evidence on a null pair) is \fpRaw\ unadjusted; applying BH gating across the driver--outcome grid reduces it to \fpBH, and we recommend BH as the default because the unadjusted per-method gates do not account for the size of the grid. The more consequential quantity is at the ensemble level: the fraction of null pairs reaching Moderate-or-higher convergence ($\CES \ge 0.4$) is \fpEnsemble, because requiring cross-tradition agreement is itself a stringent filter, a pair that passes one method's gate by chance rarely passes several.

Because the $0.4$ cutoff is an interpretive design choice (Section~\ref{sec:convclass}), a fair concern is that this headline rate might hold only at the chosen threshold. Table~\ref{tab:e8thresh} therefore sweeps the threshold and reports, at each candidate value, both the null-pair rate (false positives) and the true-pair rate (retention). The threshold constants were fixed before this sweep was run; the sweep is the audit, not the selection procedure. The false-positive property is not an artifact of where the Moderate band sits: even at the loosest threshold examined ($\CES \ge 0.3$) the null-pair rate is only \fpEnsembleThree. Nor is it specific to the primary panel: the same computation on the \texttt{nonlinear}, \texttt{confounded}, healthcare, and manufacturing scenarios yields a null-pair rate of $0.000$ at the Moderate threshold in every case, so ``on the evaluated scenarios'' is a measured statement. The retention column shows the cost side of the same tradeoff, and we state it plainly: only \tpEnsembleFour\ of true pairs reach the default Moderate threshold (\tpEnsembleThree\ at $0.3$), and the Strong band ($\CES \ge 0.7$) is reached by almost no pairs, true or null, on this scenario. The bands are conservative by construction: crossing them is strong evidence, but failing to cross them is weak evidence of absence. We describe this as strong empirical false-positive \emph{control} on the evaluated scenarios; we do not prove a formal false-discovery bound, and do not claim one.

\paragraph{All-null negative control.} Relative-max normalization is most vulnerable when an outcome has \emph{no} true driver, since some pair is always the per-outcome maximum; we construct that worst case directly. Three panels identical in shape to the primary scenario (same driver generator, $N=23$, $T=20$) have six outcomes of pure unit-level AR(1) noise, so all $570$ candidate pairs are null. The vulnerability is real and bounded: \nullModRate\ of null pairs reach Moderate, roughly sixteen times the rate on the primary panel (where genuine drivers occupy the maximum slots) yet still below half a percent, and the top-ranked pair of a fully null outcome averages CES \nullMaxCes, sitting at the Moderate boundary; the Strong band is never reached. The practical guidance follows directly: a single Moderate pair atop an otherwise quiet outcome is exactly the pattern this control produces from noise, and should be treated as a prompt for the E9 lift diagnostic rather than as a finding.

\begin{table}[ht]
\centering
\caption{E8, CES-threshold sensitivity on \texttt{primary\_panel} (mean\,$\pm$\,s.d.\ over 20 seeds): fraction of null pairs and of true pairs at or above each candidate threshold.}
\label{tab:e8thresh}
\small
\begin{tabular}{@{}lcc@{}}
\toprule
\textbf{CES threshold} & \textbf{Null pairs $\ge$ t} & \textbf{True pairs $\ge$ t} \\
\midrule
$\ge 0.3$ & 0.0051 $\pm$ 0.0033 & 0.69 $\pm$ 0.09 \\
$\ge 0.35$ & 0.0012 $\pm$ 0.0012 & 0.60 $\pm$ 0.06 \\
$\ge 0.4$ (default) & 0.0003 $\pm$ 0.0006 & 0.50 $\pm$ 0.05 \\
$\ge 0.45$ & 0.0002 $\pm$ 0.0005 & 0.41 $\pm$ 0.06 \\
$\ge 0.5$ & 0.0000 $\pm$ 0.0000 & 0.33 $\pm$ 0.07 \\
$\ge 0.6$ & 0.0000 $\pm$ 0.0000 & 0.19 $\pm$ 0.04 \\
$\ge 0.7$ & 0.0000 $\pm$ 0.0000 & 0.01 $\pm$ 0.03 \\
\bottomrule
\end{tabular}

\end{table}

\subsection{E9: Out-of-Sample Predictive Lift}\label{sec:exp_lift}

Validation against ground truth is impossible on real deployments, so we add a consistency check that requires none: if CES tracks genuine causal relevance, then high-CES drivers should carry \emph{incremental out-of-sample predictive value} for their outcome beyond the outcome's own history, and null pairs should not. For every behavioral (driver, outcome) pair we fit a pooled autoregression $y_t \sim y_{t-1}$ (units demeaned with training-period means) and measure the change in out-of-sample $R^2$, on a chronological hold-out, from adding $(x_t, x_{t-1})$. The design is \emph{nested}: the CES used here is computed from method runs on only the chronological training window (the first ${\sim}70\%$ of periods), so no component of the score sees the held-out periods on which lift is evaluated. On \texttt{primary\_panel}, the group-level contrast is clear: pairs with $\CES \ge 0.4$ show a mean lift of \liftHigh, while null pairs with $\CES < 0.1$ show \liftNull\ (Figure~\ref{fig:e9}). The pairwise signal is much weaker: the rank correlation between CES and lift across all pairs is only \liftSpearman, so CES separates the high-evidence group from the null group but does not finely order pairs by their predictive lift. The check is deliberately one-directional: predictive lift does not certify causation (a strong confounder also predicts), but its \emph{absence} for a high-CES pair is a red flag. Because it needs no ground truth, this diagnostic can be computed on real data, subject to the same temporal-split and data-quality assumptions, and we recommend it as a standard companion to CES.

\begin{figure}[ht]
\centering
\includegraphics[width=0.62\textwidth]{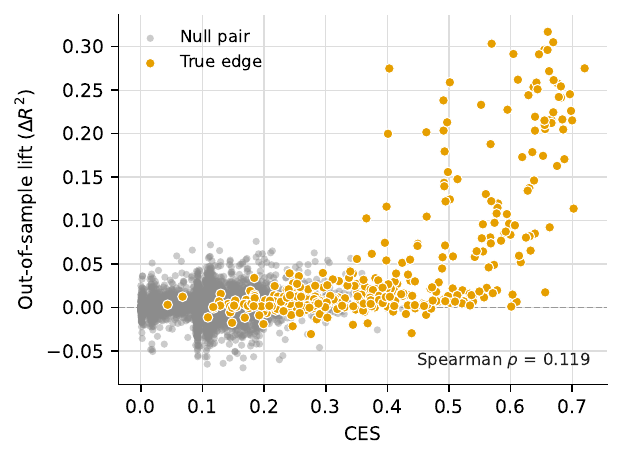}
\caption{E9, out-of-sample predictive lift vs.\ CES on \texttt{primary\_panel} (all behavioral pairs, 20 seeds). High-CES pairs concentrate at positive lift; null pairs at zero.}
\label{fig:e9}
\end{figure}

\section{Domain Applications}\label{sec:applications}\label{sec:results_domains}

We report the primary scenario and two additional synthetic domains. We describe these as \emph{case studies on synthetic data with known ground truth}, not field results; the value is in showing the framework and its robustness conclusion transfer across problem structures.

\subsection{Application A: Multi-Site System Rollout (Primary Panel)}

The \texttt{primary\_panel} scenario is a synthetic multi-site operations panel: $23$ units observed over $20$ periods during a phased rollout of a new operational system, with six outcome KPIs and $95$ numeric candidate drivers (of $99$ cataloged; four are categorical site attributes) spanning system performance, feature adoption, training, workflow, support, and staffing. A structural identity of the form Revenue $=$ Average Transaction Value $\times$ Units $\times$ Days is declared; because its components are outcomes rather than candidate drivers here, decomposition removes them from outcome--outcome consideration but does not alter the driver--outcome ranking (Section~\ref{sec:exp_decomp}). Results are those of E1--E3 above: the ensemble ranks the true behavioral drivers at the top (Precision@10 in the high range) while remaining robust to weighting.

\subsection{Applications B \& C: Healthcare and Manufacturing (Synthetic)}

Table~\ref{tab:domains_results} summarizes both. They reinforce the paper's central, deliberately unflattering point: the ensemble is not uniformly the single best method. In \textbf{healthcare} it \emph{matches} the best individual method (both F1@10 \hcF); in \textbf{manufacturing}, where several true edges are symmetric non-linearities, a single method is clearly ahead (\mfgF\ vs.\ \mfgBest). The ensemble is never the worst and never collapses, but it does not strictly dominate in either domain. The value proposition is therefore explicitly \emph{not} ``the ensemble wins''; it is that the identity of the winning method changes across healthcare, manufacturing, and the primary scenario, so the pool avoids requiring the analyst to commit to a fixed method before the data-generating regime is known. RF+SHAP is consistently strong; temporal methods matter only where lags or breaks exist.

\begin{table}[ht]
\centering
\caption{Domain applications (synthetic, full ensemble). F1@10 with the best individual method for comparison.}
\label{tab:domains_results}
\small
\begin{tabular}{@{}lccc@{}}
\toprule
\textbf{Domain} & \textbf{\# true edges} & \textbf{Ensemble F1@10} & \textbf{Best individual F1@10} \\
\midrule
Healthcare (synthetic) & \hcEdges & \hcF & \hcBest \\
Manufacturing (synthetic) & \mfgEdges & \mfgF & \mfgBest \\
\bottomrule
\end{tabular}
\end{table}

\subsection{Sachs Real-Data Benchmark}\label{sec:results_sachs}

On the \emph{real} Sachs flow-cytometry data (seven applicable methods, cross-sectional), MCES attains Precision@5 $=$ \sachsPfive\ against the consensus network, with its top-ranked pairs (in CES order) being canonical signaling edges: PKC$\to$P38, Erk$\to$Akt, PKC$\to$Jnk, PIP3$\to$PIP2, and Raf$\to$Mek. Precision falls off deeper in the list (Precision@10 $=0.70$), as expected: the temporal and panel methods cannot run on cross-sectional data, so only seven of the eleven methods contribute, and the biological network is dense with feedback that observational scoring cannot orient. Still, obtaining top-5 precision of $1.0$ on data generated by a process entirely outside our control is meaningful external corroboration.

\subsection{Bayesian-Network Structure Benchmarks}\label{sec:results_bnlearn}

To test recovery beyond a single network, we run the seven cross-sectional methods on \bnNetCount\ standard \texttt{bnlearn} benchmarks spanning $20$ to $76$ nodes across six unrelated domains, each with an exact published DAG (Table~\ref{tab:bnlearn}). Three observations hold across the suite. First, top-of-list precision is strong \emph{given the declared partition} (which on these benchmarks is derived from the gold DAG; the audit below removes it): Precision@5 is $1.0$ on five of the six networks (including the $70$-node \textsc{Hepar2} and $76$-node \textsc{Win95pts}), dropping only on the densely-connected $56$-node \textsc{Hailfinder} ($0.6$), so on most networks the highest-CES pairs are genuine edges regardless of scale, the property a practitioner relies on. Second, full-edge-set recovery (F1@K, $0.40$ to $0.75$) does not track raw node count: the $76$-node \textsc{Win95pts} is recovered better than the $56$-node \textsc{Hailfinder}, and the smallest network (\textsc{Child}) is recovered best. We do not have a structural account of \textsc{Hailfinder}'s difficulty, and we note that simple graph statistics do not supply one: \textsc{Win95pts} has a \emph{higher} average degree ($1.47$ vs.\ $1.18$) and a higher mean parent count among non-root nodes ($2.67$ vs.\ $1.69$) yet is recovered better, so neither density nor parent sharing explains the gap. Distributional properties of the sampled data (many-state variables, skewed conditional distributions at $1{,}000$ samples) are a plausible cause we have not isolated. Third, and consistent with the rest of the paper, the ensemble neither wins big nor collapses: it leads the best individual method on \textsc{Hepar2}, ties it on \textsc{Child}, and trails it by at most about $0.11$ F1 elsewhere, remaining a method-agnostic default without the analyst having to know which method will lead on a given network or domain. These are sampled-data benchmarks, but their \emph{structure} is external, published, and was not designed with our methods in mind, so together with Sachs they broaden the benchmark evaluation across externally defined structures and domains.

\paragraph{Orientation audit: the declared partition is load-bearing.} The results above are for the task MCES actually targets, ranking a \emph{declared} driver--outcome grid, but on these benchmarks the declaration itself was derived from the published DAG (each node's role is assigned by its net edge direction, the same convention as Sachs). That is a form of ground-truth assistance a practitioner doing general discovery would not have, so we audit it: Table~\ref{tab:bnlearnU} repeats the evaluation with \emph{no} role assignment, every node is both candidate driver and candidate outcome, all ordered pairs compete, and a reverse-oriented edge counts as an error. Performance drops sharply: Precision@5 falls to $0.2$--$0.6$ and ensemble F1@K to $0.25$--$0.35$, with the best individual method similarly reduced. The conclusion is one we state plainly rather than bury: \textbf{the declared partition carries substantial information, and MCES does not recover edge orientation on its own}, as most of its members are direction-symmetric. In the intended use case the partition is genuine domain knowledge (analysts know which variables are interventions and which are KPIs), and the restricted results above measure exactly that setting; the unrestricted results measure general discovery, a task MCES does not claim, and on which specialized structure learners with orientation rules are the appropriate tools. Reverse-direction penalization (Section~\ref{sec:futurework}) is the natural extension.

\begin{table}[ht]
\centering
\caption{E10b, orientation audit: the same networks with no gold-DAG role assignment (all ordered pairs; reverse orientations are errors). Compare Table~\ref{tab:bnlearn}: removing the declared partition sharply reduces performance for the ensemble and the best individual method alike.}
\label{tab:bnlearnU}
\small
\begin{tabular}{@{}lrrccc@{}}
\toprule
\textbf{Network} & \textbf{Nodes} & \textbf{Pairs} & \textbf{P@5} & \textbf{Ens.\ F1@K} & \textbf{Best ind.\ F1@K} \\
\midrule
Child & 20 & 380 & 0.20 & 0.32 & 0.40 \\
Insurance & 27 & 702 & 0.20 & 0.25 & 0.33 \\
Alarm & 37 & 1332 & 0.20 & 0.35 & 0.35 \\
Hailfinder & 56 & 3080 & 0.60 & 0.30 & 0.35 \\
Hepar2 & 70 & 4830 & 0.40 & 0.28 & 0.31 \\
Win95pts & 76 & 5700 & 0.60 & 0.34 & 0.31 \\
\bottomrule
\end{tabular}

\end{table}

\begin{table}[ht]
\centering
\caption{Bayesian-network structure benchmarks (\texttt{bnlearn}). $K$ is the number of scoreable forward edges under the driver/outcome orientation. The ensemble tracks the best individual method at each scale.}
\label{tab:bnlearn}
\small
\begin{tabular}{@{}lccccc@{}}
\toprule
\textbf{Network} & \textbf{Nodes} & \textbf{Scoreable} & \textbf{P@5} & \textbf{Ens.\ F1@K} & \textbf{Best ind.\ F1@K} \\
\midrule
Child & 20 & 12/25 & 1.00 & 0.75 & 0.75 \\
Insurance & 27 & 28/52 & 1.00 & 0.68 & 0.79 \\
Alarm & 37 & 28/46 & 1.00 & 0.71 & 0.79 \\
Hailfinder & 56 & 40/66 & 0.60 & 0.40 & 0.47 \\
Hepar2 & 70 & 95/123 & 1.00 & 0.49 & 0.47 \\
Win95pts & 76 & 82/112 & 1.00 & 0.63 & 0.65 \\
\bottomrule
\end{tabular}

\end{table}

\section{Discussion}\label{sec:discussion}

\subsection{Contributions}

\begin{enumerate}[nosep]
  \item A \textbf{quantitative operationalization of causal triangulation} from raw observational data, turning a qualitative recommendation into a computable convergence score that pools methods across distinct mathematical traditions.
  \item \textbf{Structural--Behavioral Decomposition}, with a precise statement of \emph{when} it matters (identity components must be candidate drivers) and a controlled demonstration that it then raises top-of-list precision from \decompPthreeWithout\ to \decompPthreeWith.
  \item \textbf{A characterization of when synthesis helps.} Rather than claim uniform dominance, we show the ensemble's value is that it is a method-agnostic default: no single method is best across scenarios, so pooling avoids an easily-wrong prior method choice, backed by a variance argument (Section~\ref{sec:variance}) and a diversity measurement (Section~\ref{sec:diversity}). We frame this as avoided commitment rather than proven optimality.
  \item \textbf{Scenario-specific calibration and multiple-testing controls}: an isotonic calibration of CES to an empirical true-edge rate on held-out synthetic data (whose transportability to new domains we do not claim), and Benjamini--Hochberg gating across the driver--outcome grid that reduces the per-method false-positive rate.
  \item A \textbf{reference implementation} with three synthetic domains and a real-data benchmark loader (available from the authors on request).
\end{enumerate}

\subsection{Limitations}

We state these plainly; several qualify claims made above.
\begin{enumerate}[nosep]
  \item \textbf{The ensemble is not uniformly the best method.} On some scenarios a single method, or an external structure learner (Section~\ref{sec:baselines}), edges it out on F1; the case for MCES is robustness across scenarios and false-positive control, not universal dominance. This framing is itself a revision: accuracy improvement was the original design goal, and the evidence did not support it.
  \item \textbf{Weighting is near-irrelevant to accuracy.} We default to uniform weights; the tiered alternative yields no measurable F1 gain (Section~\ref{sec:sensitivity}) and is offered only for its interpretability property.
  \item \textbf{Within-outcome relative normalization.} Raw CES is ordinal and comparable primarily within an outcome. Cross-outcome probability interpretation requires calibration and validation for the target distribution (Section~\ref{sec:calibration}).
  \item \textbf{Shared data, correlated errors.} All methods see the same data; the variance benefit depends on measured diversity ($\bar\rho \approx$ \meanMethodCorr\ here) and would shrink on data where methods become redundant (e.g.\ Granger and transfer entropy on Gaussian temporal series).
  \item \textbf{Hidden confounding, feedback, colliders} bias every observational method and are not removed by pooling (Section~\ref{sec:failures}). MCES prioritizes hypotheses; it does not replace experiments.
  \item \textbf{Decomposition requires domain knowledge} to enumerate identities, and is a no-op when components are not candidate drivers.
  \item \textbf{The declared driver/outcome partition is load-bearing.} MCES ranks a declared grid; it does not recover edge orientation. The E10b audit (Section~\ref{sec:results_bnlearn}) shows that removing the partition on the network benchmarks drops Precision@5 from $1.0$ to $0.2$--$0.6$: where the partition is not genuine domain knowledge, MCES's headline precision does not apply.
  \item \textbf{Applicability limits}: cross-sectional data disables the temporal and panel methods (as on Sachs, leaving seven); small $N$/$T$ reduces power (quantified at the ensemble level by the E3 sweep, Section~\ref{sec:results_e3}, where panel width was the binding constraint on the primary scenario); ITS needs a break point.
  \item \textbf{Panel dependence in the non-panel methods}: partial correlation, lasso, distance correlation, PPS, and RF+SHAP pool observations across units and time, so their inferential outputs are not adjusted for within-unit or temporal dependence (Section~\ref{sec:validation}) and should be read as screening statistics; group-aware resampling is future work.
  \item \textbf{Heuristic convergence thresholds}: the $0.4$ and $0.7$ band cutoffs are interpretive design choices. Section~\ref{sec:exp_fp} now audits them with a threshold sweep on the evaluated scenarios (the false-positive property is stable across candidate cutoffs), but the cutoffs remain conventions, not estimated quantities, and their interpretation does not transfer to new domains without the same audit.
  \item \textbf{Multiple testing across the grid} is controlled by Benjamini--Hochberg gating by default (Section~\ref{sec:exp_fp}); we report unadjusted per-method rates only as a comparison. We control the empirical false-positive rate on the evaluated scenarios but prove no formal false-discovery bound.
  \item \textbf{Computational cost}: eleven methods over $M \times K$ pairs is expensive (minutes on our scenarios; the causal forest and transfer entropy dominate).
\end{enumerate}

\subsection{When NOT to Use MCES}

\begin{itemize}[nosep]
  \item When you \emph{can} run a controlled experiment, do that instead.
  \item When $N < 10$ units, panel methods lose power; at $N = 1$ MCES runs in the degraded single-unit mode (Section~\ref{sec:weights}) with correspondingly narrower evidence coverage and fewer applicable methods.
  \item When $T < 5$ periods, insufficient for time-series methods.
  \item When the relationship is obviously structural, no statistics needed.
  \item When only cross-sectional data exists (no panel structure), Granger, Transfer Entropy, and ITS are inapplicable.
\end{itemize}

\subsection{Comparison to Alternative Approaches}

\begin{table}[ht]
\centering
\caption{When to use alternative approaches instead of MCES.}
\label{tab:alternatives}
\small
\begin{tabularx}{\textwidth}{@{}lY@{}}
\toprule
\textbf{Approach} & \textbf{When to Use Instead of MCES} \\
\midrule
A/B test & When randomization is possible \\
Single well-specified causal model & Strong theory + large $N$ + one specific question \\
CausalTune / method selection & Computational budget is limited, need one fast answer \\
Causal-Copilot & Need guided exploration, not rigorous synthesis \\
\bottomrule
\end{tabularx}
\end{table}

\subsection{Future Work}\label{sec:futurework}

\begin{enumerate}[nosep]
  \item \textbf{Learned weights and redundancy pruning:} learn weights from labeled causal datasets and jointly down-weight redundant pairs (e.g.\ Granger/transfer entropy on near-Gaussian data), which our diversity analysis flags.
  \item \textbf{Effect-estimation benchmarks:} extend the evaluation to official semi-synthetic benchmark datasets such as IHDP and ACIC, using treatment-effect metrics appropriate to their single-treatment setting (our current loaders for these are synthetic and are excluded from claims).
  \item \textbf{Reverse-direction penalization:} run the directional methods in both orientations and penalize pairs with stronger reverse evidence, relaxing the a-priori driver/outcome split.
  \item \textbf{Method expansion:} propensity-score matching, instrumental variables, and difference-in-differences via DoWhy.
  \item \textbf{LLM integration:} use LLMs to propose the structural decomposition and to narrate CES results, following Causal-Copilot \citep{causalcopilot2025}.
\end{enumerate}

\section{Conclusion}\label{sec:conclusion}

Observational causal inference is among the most ubiquitous challenges across science and industry. Practitioners overwhelmingly rely on single analytical methods, each with well-documented blind spots. Recent advances in automated causal inference have focused on \emph{selecting} the optimal method for a dataset or \emph{aggregating} instances of the same algorithm, but neither addresses the fundamental limitation of method-specific failure modes.

MCES operationalizes the qualitative recommendation to ``use multiple methods and see if they agree'' \citep{munafio2018}. By running eleven methods across eight distinct mathematical traditions and pooling their non-commensurable outputs into a Convergent Evidence Score, it produces a reproducible ranking of candidate driver--outcome relationships by cross-method agreement. Structural--Behavioral Decomposition prevents a specific class of false positives, algebraic-identity edges, exactly in the setting where identity components are candidate drivers.

Empirically, on synthetic ground truth, the real Sachs benchmark, six external Bayesian-network structure benchmarks, and two additional synthetic domains, MCES places true edges at the top of its ranking (Precision@5 $=$ \ensemblePfive\ on the primary scenario) while showing a low empirical rate of null pairs reaching Moderate-or-higher convergence on the evaluated scenarios. The evaluation is deliberately even-handed: the pool does not uniformly dominate individual methods (on some scenarios a single method scores higher), and its value is that it is a \emph{method-agnostic default}, since no individual method is uniformly best across the evaluated scenarios, so pooling avoids committing in advance to a single analytical tradition when the right choice is unknown.

The central takeaway is modest by design. MCES summarizes whether heterogeneous analytical approaches converge on the same candidate relationship. It is intended to prioritize hypotheses when method choice is uncertain, not to replace experimental identification or a well-specified causal design, and its evidence score is not a transferable probability of causation. A reference implementation is available from the authors on request.

\bibliographystyle{plainnat}
\bibliography{references}

\appendix

\section{Method Comparison Matrix}\label{app:methods}

Table~\ref{tab:method_full} provides the full comparison of all eleven MCES methods.

\begin{table}[H]
\centering
\caption{Complete method comparison: what each method captures, misses, and assumes. The final column lists the \emph{optional} tiered weight; the default weighting is uniform.}
\label{tab:method_full}
\small
\begin{tabularx}{\textwidth}{@{}P{2.4cm}P{2.8cm}P{2.8cm}P{2.8cm}c@{}}
\toprule
\textbf{Method} & \textbf{Captures} & \textbf{Misses} & \textbf{Key Assumption} & \textbf{Optional tiered $w_k$} \\
\midrule
Partial Corr. & Conditional linear assoc.\ after adjustment & Non-linear, direction & Linearity & 0.06 \\
Lasso & Sparse selection & Non-linear, temporal & Linearity, sparsity & 0.06 \\
Distance Corr. & Arbitrary-form dependence & Direction, confounders & Exchangeability (perm.\ test) & 0.06 \\
Mixed Effects & Unit baselines, panel & Non-linear, direction & Linearity & 0.07 \\
RF + SHAP & Non-linear, interactions & Direction, temporal & Predictive validity, exchangeability & 0.08 \\
PPS & Univariate OOS skill, asymmetry & Confounders, temporal & Representative folds, exchangeability & 0.05 \\
Granger & Temporal predictability & Non-linear & Stationarity & 0.14 \\
ITS & Intervention impact & Cross-sectional & Known intervention & 0.14 \\
Transfer Entropy & Non-linear direction & Small-sample sensitive & Sufficient time series & 0.13 \\
Bayesian Net & Adjacency, short paths & Hidden confounders, orientation & Causal sufficiency & 0.11 \\
Causal Forest & Heterogeneous effects & Continuous dose-response (this impl.) & Unconfoundedness, overlap, SUTVA & 0.10 \\
\bottomrule
\end{tabularx}
\end{table}

\section{Synthetic Data Generation}\label{app:synthetic}

The synthetic generator creates panel data $\mathbf{O} \in \R^{N \times T \times (M+K)}$ with embedded structure via one of five data-generating processes, matching the implementation:

\begin{enumerate}[nosep]
  \item \textbf{Drivers.} Numeric drivers are drawn per unit from truncated normals within catalog ranges, with AR(1) temporal dynamics ($\rho = 0.85$) for non-static categories.
  \item \textbf{True causal relationships (linear).} For each true edge, $y_{l} \mathrel{+}= s_j\, d_j\, z(x_j)$, where $z(\cdot)$ standardizes the driver, $s_j \in [0.15,0.6]$ is the strength, and $d_j \in \{+1,-1\}$ the sign.
  \item \textbf{Non-linear effects.} Before scaling, the standardized driver may be passed through one of four transforms: $z^2{-}1$ (quadratic), $\mathrm{sgn}(z)\sqrt{|z|}$ (sqrt), $\mathrm{sgn}(z)\log(1{+}|z|)$ (log), or $\mathbb{I}[z{>}0]{-}\tfrac12$ (threshold).
  \item \textbf{Time lags.} Effects may be applied at lag $\tau \in \{1,2,3\}$ within each unit.
  \item \textbf{Confounders.} Hidden variables injected into a subset of drivers \emph{and} outcomes, creating spurious associations.
  \item \textbf{Structural identities.} For the decomposition experiment, an outcome is generated as the exact product of two component drivers (Section~\ref{sec:exp_decomp}).
  \item \textbf{Noise.} $\varepsilon_{i,t} \sim \mathcal{N}(0, \sigma^2)$ with scenario-specific $\sigma$ (typically $0.2$--$0.35$).
\end{enumerate}

The ground-truth edge set is known by construction, enabling exact Precision@$K$, Recall@$K$, F1@$K$, Spearman $\rho$, and calibration.

\end{document}